\documentclass[copyright,creativecommons]{eptcs}
\providecommand{\event}{AiML 2026} 

\usepackage{aiml26}

  \usepackage{underscore}         
  \usepackage[T1]{fontenc}        

\usepackage{amsmath}
\usepackage{amssymb,amsthm}
\usepackage{xcolor}
\usepackage{mfl}
\usepackage{enumitem,mathtools}
\usepackage{bbold}
\usepackage{orcidlink}
\usepackage{multicol}
\usepackage{comment}

\renewcommand{\alg}[1]{{\ensuremath{\boldsymbol {\mathbf{#1}}}}}
\renewcommand{\R}{\ensuremath{\alg{R}}}
\newcommand{\rad}{\ensuremath{\mathsf{rad}}}
\newcommand{\Hom}{\ensuremath{\mathrm{Hom}}}
\newcommand{\A}{\alg{A}}

\newcommand{\mAb}{\ensuremath{\mathsf{mAb}}}

\newcommand{\pmAb}{\ensuremath{\mAb_{\mathsf{p}}^-}}
\newcommand{\Lang}{\mathfrak{L}}
\newcommand{\Prop}{\mathit{Prop}}
\newcommand{\Val}{\mathcal{V}}

\title{Pointed Modal Abelian Logic, Algebraically}
\author{Filip Jankovec \orcidlink{0009-0002-2746-0042}  
\institute{Institute of Computer Science of the 
Czech Academy of Sciences \\
Department of Algebra, Faculty of Mathematics and Physics, Charles University \\
Prague, The Czech Republic}
\email{ {jankovec@cs.cas.cz}}
\and
Wolfgang Poiger \orcidlink{0009-0002-8485-5905}
\institute{Institute of Computer Science  
of the Czech Academy of Sciences \\
Prague, The Czech Republic.}
\email{poiger@cs.cas.cz}
}

\newcommand{\titlerunning}{Pointed Modal Abelian Logic, Algebraically}
\newcommand{\authorrunning}{F. Jankovec and W. Poiger}

\hypersetup{
  bookmarksnumbered,
  pdftitle    = {\titlerunning},
  pdfauthor   = {\authorrunning},
  pdfsubject  = {EPTCS},               
  pdfkeywords = {Modal Abelian Logic, Abelian l-groups, Real-valued Kripke Semantics, Truth Lemma, Algebraic Completeness} 
}

\begin{document}
\maketitle

\begin{abstract}
In this article, we investigate the pointed modal logic of reals. We first establish its relational (Kripke) semantics with bounded valuations in the Abelian $\ell$-group of real numbers with the distinguished negative constant $-1$. To study this logic algebraically, we introduce the variety of negatively pointed modal Abelian $\ell$-groups, in particular we focus on the strongly pointed members thereof. Constructing complex algebras and canonical frames, we establish a Truth Lemma connecting the relational and algebraic frameworks. 
Since finitary axiomatizations cannot fully capture the Kripke validities of the reals we introduce further algebraic constraints, in particular including an infinitary Archimedean-style rule. 
Finally, we prove a corresponding `infinitary algebraic completeness' result for pointed modal Abelian logic with respect to the variety of pointed modal Abelian $\ell$-groups.
\end{abstract}

\section{Introduction}\label{section:Introduction}

\emph{Abelian logic}, independently introduced by Meyer \& Slaney \cite{Meyer-Slaney:AbelianLogic} and Casari \cite{Casari:ComparativeLogics}, is a well-known contraclassical, paraconsistent logic. It is also referred to as the \emph{logic of Abelian $\ell$-groups} \cite{Butchart-Rogerson:Abel} or \emph{Abelian Group Logic} \cite{Paoli:LogicGroups} since the matrix models of Abelian logic are based on Abelian $\ell$-groups, their positive cones being the filters of designated elements.
A primary motivation for studying Abelian logic is its connection to \emph{Łukasiewicz logic}, since there exists a well-known categorical equivalence between the categories of strongly pointed Abelian $\ell$-groups and $\mathsf{MV}$-algebras, the algebras of Łukasiewicz logic \cite{Cignoli-Ottaviano-Mundici:AlgebraicFoundations}.       

Recently, \emph{modal Abelian logic} has also gained some attention, particularly from the perspective of its real-valued Kripke semantics \cite{Diaconescu-Metcalfe-Schnuriger:RealModalpre,Diaconescu-Metcalfe-Schnuriger:RealModal, Metcalfe-Tuyt:MonadicLogicAbelianGroups}. Following this, we investigate \emph{pointed real-valued modal Abelian logic} from an algebraic perspective. Our motivation for this focus is two-fold. First, it structurally parallels real-valued modal Łukasiewicz logic. Exploiting this connection, \emph{pointed} modal Abelian logic (albeit not explicitly named therein) is already used in \cite[Sect.\ 2.3]{Diaconescu-Metcalfe-Schnuriger:RealModal} to directly analyze real-valued modal Łukasiewicz logic \cite{HansoulTeheux:ModalLukasiewicz}. Specifically, their analysis requires the inclusion of a designated constant $c$ to act as a fixed point under the $\Box$-modality (i.e., $\Box c = c$), which serves as the falsum for Łukasiewicz logic. 
Furthermore, by substituting bounded $\mathsf{MV}$-algebras for Abelian $\ell$-groups, our framework lifts the traditional restriction to the bounded interval $[0,1]$, allowing formulas to instead be evaluated over $\mathbb{R}$. Second, this framework allows us to introduce a modal extension of \emph{Łukasiewicz unbound logic}. Originally defined in \cite{Cintula-Jankovec-Noguera:SuperabelianLogics}, this logic is strongly complete with respect to the Abelian $\ell$-group of real numbers equipped with the designated element $-1$. Consequently, by equipping our real-valued Kripke models with the same constant $-1$, we provide a natural modal semantics for Łukasiewicz unbound logic. 
Despite these connections, the \emph{algebraic} aspects of pointed modal Abelian logic have hitherto remained unexplored.

In this article we close this gap, introducing the variety of (negatively) \emph{pointed modal Abelian $\ell$-groups} $\pmAb$ (Definition~\ref{definition:ModalAbelienellGroup}) and explore its relationship to real-valued Kripke semantics of pointed modal Abelian logic (with constant $-1$). We do this by constructing complex algebras in $\pmAb$ from Kripke frames (entailing an algebraic soundness result, Proposition~\ref{proposition:SoundnessFullVariety}) and vice versa. Our first main result is a Truth Lemma for \emph{strongly} negatively pointed modal Abelian $\ell$-groups (Corollary~\ref{Lemma:TruthLemma}). Using this, we characterize the validities of pointed modal Abelian logic via an equational condition \eqref{InfinitaryRule} on the variety $\pmAb$, which may be seen as a form of (infinitary) \emph{algebraic completeness} (Theorem~\ref{t:completeness}). 

The primary technical challenges to overcome here stem from the algebraic behavior of strong units and the non-trivial intersection of maximal $\ell$-ideals. In general, a non-zero element of a negatively pointed modal Abelian $\ell$-group may belong to every maximal $\ell$-ideal. This effectively prevents us from using such an element as a falsifier for a formula, since no world in the corresponding canonical frame would evaluate it as non-zero. The rule \eqref{InfinitaryRule} ensure this does not happen. Furthermore, this rule allows us to argue via \emph{strongly} pointed algebras in our proof of Theorem~\ref{t:completeness}.
While it is a well-known fact that the property of possessing a strong unit is not first-order definable, enforcing a strong unit condition is justified within our framework since it corresponds directly to the restriction to \emph{bounded} valuations in real-valued Kripke models (required to guarantee the semantics of the $\Box$-modality is well-defined). We leave it up for future research to determine whether these issues could be circumvented in other ways. 

$\bullet$ \textbf{Related Work.} Prior research on \emph{real-valued modal logic} \cite{Diaconescu-Metcalfe-Schnuriger:RealModal, Diaconescu-Metcalfe-Schnuriger:RealModalpre} and \emph{modal \L ukasiewicz logic} \cite{HansoulTeheux:ModalLukasiewicz} has been a main inspiration for the current paper. More specifically, we build on the definition of (pointed) real-valued modal logic of the former and introduce an infinitary rule similar to the latter (here, let us also point out \cite{Baratella2018} about \emph{continuous modal logic} and \cite{LucasMio2022} about \emph{modal Riesz spaces}). More generally, our work stands in the research tradition on \emph{many-valued modal logic} (e.g.\ \cite{Fitting1991, Bou2009, HansoulTeheux:ModalLukasiewicz,Busaniche2022, Rodriguez2021}), note that we are using crisp accessibility relations (like \cite{Rodriguez2021} for Gödel and \cite{HansoulTeheux:ModalLukasiewicz} for \L ukasiewicz modal logic).  
Naturally the current paper is also related to the \emph{algebraic} theory of \emph{lattice-ordered Abelian groups} and \emph{$\mathsf{MV}$-algebras}  \cite{Cignoli-Ottaviano-Mundici:AlgebraicFoundations}. In particular, for more context about \emph{pointed} Abelian $\ell$-groups and their logic see \cite{Cintula-Jankovec-Noguera:SuperabelianLogics,Young:VarietiesPointedAbelianLGgroups,Jankovec:SubvarietiesPointedAbelianLGroups}.

$\bullet$ \textbf{Structure of the article.} In Section~\ref{section: AbelianLogic}, we provide some preliminaries about modal Abelian $\ell$-groups and proving some helpful results about $\ell$-ideals. In Section~\ref{Section:SemanticsAndModalAlgebras}, we introduce the real-valued relational semantics of pointed modal Abelian logic (Subsection~\ref{subsection:RelationalSemantics}) as well as the variety of pointed modal Abelian $\ell$-groups (Subsection~\ref{subsection:ModalAbelianellGroups}). We also explain how to relate these two via complex algebras and canonical frames (Subsection~\ref{subsection:AlgebrasAndFrames}). Section~\ref{section:TruthLemmaCompleteness} contains the main results of the article, namely a Truth Lemma for strongly pointed algebras (Subsection~\ref{subsection:TruthLemma}) and the infinitary algebraic completeness derived thereof (Subsection~\ref{subsection:TowardsCompleteness}). We discuss potential directions for future research in the concluding Section~\ref{section:Conclusion}.

\section{Pointed Abelian \texorpdfstring{$\ell$-groups}{l-groups}}
\label{section: AbelianLogic}

In this (mostly preliminary) section, we recall some important facts about pointed Abelian $\ell$-groups, focusing in particular on structures equipped with a negative strong unit (the reader should feel free to skip some of the more technical proofs here until we refer back to the corresponding results later on).
We give a brief exposition of $\ell$-ideal theory, and analyze the structure of $\ell$-ideals in the presence of such a designated unit. Adapting techniques from the theory of $\mathsf{MV}$-algebras, we characterize the radical, the intersection of all maximal $\ell$-ideals, via infinitesimal elements. Lastly, we establish a pivotal correspondence between the maximal $\ell$-ideals of a strongly negatively pointed Abelian $\ell$-group and its homomorphisms into the pointed Abelian $\ell$-group of reals $\alg R_{-1}$, which will be an important `bridge' between the abstract algebraic syntax and our concrete real-valued semantics.   

\emph{Abelian logic} (e.g.\ see \cite{Meyer-Slaney:AbelianLogic,Casari:ComparativeLogics,Cintula-Jankovec-Noguera:SuperabelianLogics}) can be elegantly axiomatized as the negation-free fragment of $\mathsf{FL}_e$ (i.e.\ Full Lambek calculus with exchange) extended by the axiom of relativity  $\big((\varphi \imp \psi) \imp \psi\big)  \imp \varphi$. Due to our algebraic focus, here we only treat Abelian logic via its algebraic models, i.e.\ Abelian $\ell$-groups. Nevertheless, let us mention that the logics corresponding to the varieties defined in this paper inherit similar axiomatizations, since they are superabelian (see \cite{Cintula-Jankovec-Noguera:SuperabelianLogics}). 

\begin{definition}[\sf{Abelian $\ell$-group}]\label{def:AbelianLGroup}
An algebra $\alg A=\tuple{A,+,-,\lor,\land,0}$ is an \emph{Abelian $\ell$-group} if $\tuple{A,+,-,0}$ is an Abelian group, $\tuple{A,\lor,\land}$ is a lattice and $\alg A$ satisfies the monotonicity condition $x \leq y \Rightarrow x+z \leq y+z$. 
\end{definition}

When brackets are omitted the lattice operations take precedence over the group operations, e.g.\ $x+y \land z$ is understood as $x+(y \land z)$. For $a \in A$ we define the \emph{absolute value} $|a| := a \lor -a$ and we let $n \cdot a:=a+\cdots+a$ denote the $n$-fold sum for every $n \in \mathbb{N}$. For $z \in \mathbb Z {\setminus} \mathbb N$ we set ${z \cdot a:=(-z) \cdot (-a)}$.

\begin{definition}[\sf{Strong Unit}]\label{def:StrongUnit}
Let $\alg A$ be an Abelian $\ell$-group. We say that $p \in A$ is a \emph{strong unit} of $\alg A$ if
\begin{equation} \label{e:StrongUnitGeneral} \tag{\textsf{SU}}
    \forall a \in A \colon \exists z \in \mathbb{Z} \colon z \cdot p \geq a. 
\end{equation}
Note that for a strictly negative strong unit $p < 0$ the condition \eqref{e:StrongUnitGeneral} simplifies to 
\begin{equation} \label{e:StrongUnitNegative} \tag{\textsf{SU}-}
    \forall a \in A \colon \exists n \in \mathbb{N} \colon n \cdot p \leq a. 
\end{equation}  
An algebra $\alg A_p=\tuple{A,+,-,\lor,\land,0,p}$ is a \emph{pointed Abelian $\ell$-group} if $\tuple{A,+,-,\lor,\land,0}$ is an Abelian $\ell$-group and $p \in A$. 
It is \emph{strongly pointed} if $p$ is a strong unit and \emph{negatively pointed} if $p \leq 0$. 
\end{definition}

\begin{remark}
    Traditionally, a strong unit is required to be strictly positive (e.g.\ see \cite[p.\ 300]{Birkhoff:LatticeTheory}). Note that we use $\mathbb{Z}$ in \eqref{e:StrongUnitGeneral}, which allows negative strong units. This distinction purely serves for notational convenience: since we primarily work in the negative cone, it avoids a proliferation of minus signs in our proofs. While our definition differs from the standard one, the structural role is identical: $p < 0$ is a strong unit in our sense if and only if its additive inverse $-p > 0$ is a standard strong unit.
\end{remark}

As a standard example, let $\alg R_{-1}=\tuple{\mathbb R,+,-,\lor,\land,0,-1}$ be the Abelian group of real numbers with addition together with its linear order and the designated point $-1$. This clearly is a strongly negatively pointed Abelian $\ell$-group (in fact, it is our \emph{algebra of truth-degrees} later on).

The following properties of Abelian $\ell$-groups are immediate consequences of the well-known fact that the variety of Abelian $\ell$-groups is generated by the $\ell$-group of integers $\mathbf{Z}$ (see \cite[Thm. 6.1]{Anderson-Feil:LatticeOrderedGroups}).

\begin{lemma} \label{l:identities}
Let $\alg A=\tuple{A,+,-,\lor,\land,0}$ be an Abelian $\ell$-group and $a,b \in A$.  
    \begin{enumerate}[label=(\roman*)]
        \item $a= a\land 0+a \lor 0$, 
        \item $\big((a-b) \lor 0\big) \land \big((b-a) \lor 0\big)=0$, \label{a:prime}  
        \item $(a \lor b)+c=(a+c) \lor (b+c)$,
        \item $(a-b) \lor 0=a-a\land b$,
        \item $|a+b|\leq |a|+|b|$.
    \end{enumerate}
\end{lemma}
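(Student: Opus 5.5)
All five items are identities in the language of Abelian $\ell$-groups, with (v) an inequality, i.e.\ the identity $|a+b| \lor (|a|+|b|) = |a|+|b|$. Following the remark preceding the statement, we use the fact that the variety of Abelian $\ell$-groups is generated by $\mathbf{Z}$ \cite[Thm.~6.1]{Anderson-Feil:LatticeOrderedGroups}. An identity holds in every Abelian $\ell$-group as soon as it holds in $\mathbf{Z}$. It is even enough to check it in any totally ordered Abelian group, e.g.\ $\mathbb{R}$, since $\mathbf Z$ is a subalgebra of $\mathbb{R}$ and equations are preserved under subalgebras. So the plan is to write each item as an equation $s=t$ in variables $a,b,c$, and verify it for integers by a case split on the relative order of the arguments. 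In a chain, $\lor$ and $\land$ are just $\max$ and $\min$.

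The case checks are short.

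For (i), if $a\ge 0$ the right side is $0+a$, and if $a<0$ it is $a+0$.

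For (ii), one of $a-b$, $b-a$ is $\le 0$. Hence one of the two joins is $0$ and the other is $\ge 0$, so their meet is $0$.

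For (iii), translation by $c$ is order preserving on a chain. It therefore commutes with $\max$.

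For (iv), if $a\ge b$ both sides equal $a-b$. If $a<b$ both sides equal $0$.

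For (v), the inequality $|a+b|\le|a|+|b|$ is the usual triangle inequality in $\mathbb Z$. Since $x\le y$ is equivalent to the equation $x\lor y=y$, this transfers to all Abelian $\ell$-groups.

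There is no real obstacle here. The only point needing care is that the transfer principle applies to equations, which is why (v) must first be rewritten as the lattice identity above. If one preferred to avoid the generation theorem, (iii) follows directly from monotonicity and invertibility of translation. Then (iv) follows from (iii) via $a-(a\land b) = a+((-a)\lor(-b)) = 0\lor(a-b)$, and (i) is (iv) with $b=0$ rearranged. Items (ii) and (v) are the genuinely nontrivial ones, and these are most economically handled by the transfer argument.
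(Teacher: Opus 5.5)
Your proposal is correct and takes the paper's own route: the paper derives all five items directly from the fact that the variety of Abelian $\ell$-groups is generated by $\mathbf{Z}$. Your case checks on chains, with (v) rewritten as the equation $|a+b|\lor(|a|+|b|)=|a|+|b|$, simply make that verification explicit (one small slip: ``any totally ordered Abelian group'' should say ``any \emph{nontrivial} one'', since the trivial group satisfies every identity, though this is harmless because you actually check in $\mathbf{Z}$ or $\mathbb{R}$).
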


\begin{definition}[\textsf{$\ell$-ideal}]
An \emph{$\ell$-ideal} of an Abelian $\ell$-group $\alg A$ is an $\ell$-subgroup $I \subseteq \alg A$ that is \emph{convex} (equivalently $0 \leq |x| \leq |i|$ for some $i \in I$ implies $x \in I$).
An $\ell$-ideal $I$ is \emph{prime} if it satisfies 
\begin{equation} \label{e:prime}\tag{\textsf{PI}}
    a \land b \in I \quad \Rightarrow \quad a \in I \text{ or } b \in I.
\end{equation}
\end{definition}

First, let us note that for an $\ell$-ideal $I$ and $a \in A$ by definition it holds that $a \in I$ iff $|a| \in I$ iff $-|a| \in I$.
One can easily prove (using Zorn's Lemma) that for each element $a \in A{\setminus}\{0\}$ there exists an $\ell$-ideal which is maximal among all the $\ell$-ideals which do not contain $a$. Every such $\ell$-ideal is prime and we refer to them as \emph{values} of $a$ \cite[Thm. 2.4.2]{Steinberg:LatticeModules}.

It is well known that \emph{maximal} $\ell$-ideals are prime (again see \cite[Thm. 2.4.2]{Steinberg:LatticeModules}), although in general an Abelian $\ell$-group $\alg{A}$ need not have any maximal ideals. However, if $p \in A$ is a strong unit, the maximal $\ell$-ideals correspond to values of $p$ (since any $\ell$-ideal containing $p$ is improper). In fact, in the presence of a strong unit every proper $\ell$-ideal $I \subseteq \alg{A}$ can be extended to a maximal $\ell$-ideal $M \supseteq I$ (which is again easily shown via Zorn's Lemma).

Let $\alg A$ be an Abelian $\ell$-group and $S \subseteq A$.
By $C(S)$ we denote the $\ell$-ideal generated by $S$.
It is well-known \cite[Thm. 2.2.4 (b)]{Steinberg:LatticeModules} that
\begin{equation} \label{e:generated ideal}\tag{\textsf{IG}}
    C(S):=\{x \in A \mid |x|\leq |b_1|+\dots+|b_n| \text{ for some } b_1,\dots, b_n \in S \}.
\end{equation}

Let us also recall from \cite[Thm. 2.2.4 (d)]{Steinberg:LatticeModules}, that for two sets $S,T \subseteq A$ we have
\begin{equation*} 
    C(S) \lor C(T)=C(S \cup T)=\{x \in A \mid |x|\leq |b_1|+\dots+|b_n| \text{ for some } b_1,\dots, b_n \in S \cup T \}.
\end{equation*}
Therefore, for a principal $\ell$-ideal $J = \langle a\rangle= \{x \mid \exists n\in \mathbb N \colon |x|\leq n \cdot |a|\}$, we obtain 
$$I \lor J=I \lor \{x \mid \exists n \in \mathbb N \colon |x|\leq n\cdot|a|\}=\{x \mid \exists n \in \mathbb N \; \exists i \in I \colon  |x|\leq i+n\cdot|a|\}.$$
In particular, this implies that for every maximal $\ell$-ideal $I$ of an Abelian $\ell$-group $\alg A$ and $a \notin I$:
\begin{equation}\label{e:ideal-join}\tag{\textsf{IG}$\ast$}
A = \{x \mid \exists n \in \mathbb N \colon \exists i \in I \colon  |x|\leq i+n \cdot |a|\}.
\end{equation}
For a strongly pointed Abelian $\ell$-group $\alg A_p$ we define the \emph{radical} of $\alg A_p$ as the intersection of all maximal ideals of $\alg A_p$ and denote it by $\rad(\alg A_p)$. Equivalently, $\rad(\alg A_p)$ is the intersection of all values of $p$. We call $a \in A$ is \emph{infinitely small} (or \emph{infinitesimal}) if 
\begin{equation} \tag{\textsf{INF}} \label{e:infinitesimalGen}
    \forall n \in \mathbb{N}\colon n \cdot |a|\leq |p| .
\end{equation}
In the case where $a,p \leq 0$, Equation~\eqref{e:infinitesimalGen} simplifies to 
\begin{equation} \tag{\textsf{INF}-} \label{e:infinitesimalNegative}
\forall n \in \mathbb{N}\colon n \cdot a-p \geq 0.
\end{equation}
One can easily check that this is also equivalent to 
$\forall k \in \mathbb{N}\colon 2^k \cdot a-p \geq 0$.

We now show that the radical of a strongly pointed Abelian $\ell$-group consists exactly of its infinitesimal elements. The proof of the following is similar to the one of \cite[Prop. 3.6.4]{Cignoli-Ottaviano-Mundici:AlgebraicFoundations}.

\begin{lemma}\label{l:infinitesimal}
    Let $\alg A_p$ be a strongly pointed Abelian $\ell$-group. Then $a \in \rad(\alg A_p)$ iff $a$ is infinitely small.
\end{lemma}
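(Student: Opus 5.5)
We prove both directions separately, following the strategy of \cite[Prop. 3.6.4]{Cignoli-Ottaviano-Mundici:AlgebraicFoundations}. Since $a \in I$ iff $|a| \in I$ for every $\ell$-ideal $I$, and \eqref{e:infinitesimalGen} only involves $|a|$ and $|p|$, we may assume $a \geq 0$ throughout. Write $u := |p|$. This is a positive strong unit: for any $x$, \eqref{e:StrongUnitGeneral} applied to $|x|$ gives $z \cdot p \geq |x| \geq 0$, hence $|z| \cdot u \geq |x|$. Also note $p \neq 0$ unless $\alg A$ is trivial, in which case the claim is vacuous.

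\emph{Infinitesimal $\Rightarrow$ radical.} Suppose $n \cdot a \leq u$ for all $n$, and suppose towards a contradiction that $a \notin M$ for some maximal $\ell$-ideal $M$. Since $u \notin M$ (otherwise $M = A$ by the strong unit property), \eqref{e:ideal-join} applied to $x = u$ gives $u \leq i + n \cdot a$ for some $i \in M$ and $n \in \mathbb N$; we may take $i \geq 0$ by replacing it with $|i|$. Combined with $2n \cdot a \leq u$, this yields $2n \cdot a \leq i + n \cdot a$, so $n \cdot a \leq i$. Convexity then gives $n \cdot a \in M$, and hence $a \in M$ since $0 \leq a \leq n\cdot a$ (the case $n = 0$ gives $u \leq i$, so $u \in M$ directly). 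This is a contradiction.

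\emph{Radical $\Rightarrow$ infinitesimal.} Suppose $a \geq 0$ and $n \cdot a \not\leq u$ for some $n$; we must find a maximal ideal omitting $a$. Set $b := (n \cdot a - u) \lor 0$. Then $b \neq 0$, since $b = 0$ would mean $n\cdot a \le u$. Take a value $P$ of $b$, i.e.\ a prime $\ell$-ideal maximal with respect to $b \notin P$. By Lemma~\ref{l:identities}\ref{a:prime}, $b \land \big((u - n\cdot a) \lor 0\big) = 0 \in P$, so primeness forces $c := (u - n \cdot a)\lor 0 \in P$. Since $u \le n\cdot a + c$ and $c \in P$, any $\ell$-ideal $M \supseteq P$ that contains $a$ also contains $u$ and is therefore improper. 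Now $P$ is proper because $b \notin P$. Extend $P$ to a maximal $\ell$-ideal $M$, which is possible in the presence of a strong unit. Then $a \notin M$, so $a \notin \rad(\alg A_p)$.

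The main obstacle is this second direction: finding the right auxiliary element whose value must already contain $(u - n\cdot a)\lor 0$, and checking that extending to a maximal ideal cannot add $a$. The key fact is the inequality $u \leq n\cdot a + c$ combined with properness.
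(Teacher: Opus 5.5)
Your proof is correct and follows the paper's argument. The first direction uses \eqref{e:ideal-join} and convexity exactly as the paper does, with $2n$ in place of $n+1$. The second direction, as in the paper, takes a value of $(n\cdot|a|-|p|)\lor 0$, uses Lemma~\ref{l:identities}\ref{a:prime} and primeness to put $(|p|-n\cdot|a|)\lor 0$ into it, and extends it to a maximal $\ell$-ideal. The only cosmetic difference is your last step: you conclude from the direct bound $u \leq n\cdot a + c$, whereas the paper first shows $(n\cdot|a|-|p|)\lor 0 \in M$ and subtracts it to obtain $|p|-n\cdot|a| \in M$.
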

\begin{proof}
    If $a \notin \rad(\alg A_p)$ there is a maximal $\ell$-ideal $I$ such that $a \notin I$.
    Consequently, $p$ is in the $\ell$-ideal generated by $\{a\}\cup I$ and thus by \eqref{e:ideal-join} there are $n \in \mathbb N$ and $i \in I$ such that $0<|p|\leq n \cdot |a|+i$, 
     which is equivalent to $|p|-n \cdot |a| \leq i$. Assume towards contradiction that $a$ is infinitely small. Then $|p|-(n+1)\cdot |a| \geq 0$ for each $n \in \mathbb N$, equivalently $|p|-n\cdot |a| \geq |a|$ for each $n \in \mathbb N$. Thus altogether we have $0 \leq |a| \leq |p|-n\cdot |a| \leq i$, which by convexity of the $\ell$- ideal $I$ implies $a \in I$, a contradiction. 

    Conversely, assume $a \in A$ is not infinitely small. Then there is $n \in \mathbb N$ such that
    $n \cdot |a|-|p| \nleq 0$, which implies $(n \cdot |a|-|p|) \lor 0 > 0$. Let $P$ be a value of $(n \cdot |a|-|p|) \lor 0$.
    By Lemma~\ref{l:identities} \ref{a:prime} we have
    $$
    \big((n \cdot |a|-|p|) \lor 0\big) \land \big((|p|-n \cdot |a|) \lor 0\big)=0.
    $$
    Since $P$ is prime, it follows that $(|p|-n \cdot |a|) \lor 0 \in P$. Let $M$ be a maximal $\ell$-ideal containing $P$ and towards contradiction assume that $a \in M$. Then $(n \cdot |a|) \lor 0 \in M$ and $$(n \cdot |a|) \lor 0\geq (n \cdot |a|-|p|) \lor 0  > 0.$$ Therefore, $(n \cdot |a|-|p|) \lor 0 \in M $. 
    Consequently, by Lemma~\ref{l:identities}(i) we find $$ (|p|-n \cdot |a|) \lor 0-(n \cdot |a|-|p|) \lor 0=(|p|-n \cdot |a|) \lor 0+(|p|-n \cdot |a|) \land 0 =(|p|-n \cdot |a|) \in M.$$
    Since $p \notin M$ we have $|p| \notin M$ and thus also $n \cdot |a| \notin M$. However, this implies $|a| \notin M$, and thus lastly $a \notin M$, a contradiction our assumption $a \in M$ which shows $a \notin \rad(\alg{A}_p)$.   
\end{proof}

\begin{lemma} \label{l:max ideal correspondence}
    Let $\alg A_p$ be a strongly negatively pointed Abelian $\ell$-group and $I$ an $\ell$-ideal of $\alg A$. Then $I$ is maximal if and only if $I=v^{-1}(0)$ for some homomorphism $v \in \Hom(\alg A_p,\R_{-1})$. 
\end{lemma}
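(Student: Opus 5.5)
The plan is to prove the two directions separately. The direction from homomorphisms to maximal ideals is straightforward; the converse is the real content, and it reduces to Hölder's theorem applied to the quotient.

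\emph{Kernels are maximal.} Let $v \in \Hom(\alg A_p,\R_{-1})$ and set $I=v^{-1}(0)$. Since $v$ is an $\ell$-group homomorphism, $I$ is an $\ell$-ideal. It is proper because $v(p)=-1\neq 0$, so $p\notin I$. For maximality, suppose $J\supsetneq I$ is an $\ell$-ideal and pick $a\in J\setminus I$; replacing $a$ by $|a|$ we may assume $v(a)>0$. Choose $n\in\mathbb N$ with $n\cdot v(a)\geq 1$. Then
\[
v\big((|p|-n\cdot a)\lor 0\big)=(1-n\,v(a))\lor 0=0,
\]
so $(|p|-n\cdot a)\lor 0\in I\subseteq J$. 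Since $|p|\leq (|p|-n\cdot a)\lor 0+n\cdot a$ and the right-hand side lies in $J$, convexity gives $p\in J$. Finally, $p$ is a strong unit, so every element $x$ satisfies $|x|\leq m\cdot|p|$ for some $m$. Hence $J=A$, and $I$ is maximal.

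\emph{Maximal ideals are kernels.} Let $I$ be maximal and pass to the quotient $\alg A/I$, an Abelian $\ell$-group with image $\bar p$ of $p$.
\begin{enumerate}[label=(\arabic*)]
\item The quotient is totally ordered, because $I$ is prime. For any $x$, the elements $x\lor 0$ and $-(x\land 0)$ have meet $0\in I$, so one of them lies in $I$, i.e.\ $\bar x\geq 0$ or $\bar x\leq 0$.
\item The quotient has no nontrivial proper $\ell$-ideals. By the correspondence theorem, its $\ell$-ideals are the images of $\ell$-ideals containing $I$, and maximality of $I$ leaves only $\{0\}$ and the whole quotient.
\item The quotient is Archimedean, and this is the main obstacle. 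Suppose $0<n\cdot\bar x\leq\bar y$ for all $n$. The set $\{\bar z : n\cdot|\bar z|\leq|\bar y| \text{ for all } n\}$ is a convex $\ell$-subgroup in the totally ordered quotient, so it is an $\ell$-ideal. It is nonzero since it contains $\bar x$, hence by (2) it is the whole quotient. It therefore contains $\bar y$, which forces $\bar y=0$, a contradiction. Alternatively, the convex subgroup of elements bounded by multiples of $\bar x$ is nonzero, hence everything; combining this with Lemma~\ref{l:infinitesimal} excludes infinitesimals.
\end{enumerate}

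\emph{Embedding into the reals.} By Hölder's theorem, the quotient embeds as an $\ell$-group into $\mathbb R$ via an order-preserving group embedding $\iota$. Moreover $\bar p\neq 0$ and $\bar p<0$, since $p\leq 0$ and $p\notin I$. Rescaling by the positive real $-1/\iota(\bar p)$, we may assume $\iota(\bar p)=-1$. Composing the quotient map with $\iota$ yields $v\in\Hom(\alg A_p,\R_{-1})$, and since $\iota$ is injective, $v^{-1}(0)=I$.

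If citing Hölder's theorem is undesirable, I would construct $v$ directly as
\[
v(a)=\sup\{m/n : m\cdot|p|\leq n\cdot a \text{ mod } I\}.
\]
Verifying additivity of this map is routine but tedious.
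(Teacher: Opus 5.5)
Your proof is correct and follows the paper's route. For maximal $\Rightarrow$ kernel you apply Hölder's theorem to $\alg A/I$ and rescale $\bar p$ to $-1$, additionally verifying the total-order and Archimedean hypotheses that the paper leaves inside the citation; for the converse, your argument with $n\cdot v(a)\geq 1$ unpacks elementwise the paper's remark that every $\ell$-subgroup of $\R_{-1}$ is simple. One small omission: in the embedding step, state that $p\notin I$ because $I$ is proper and any $\ell$-ideal containing the strong unit $p$ is all of $A$, which you already showed in the first part.
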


\begin{proof}
$(\Rightarrow)$:
Let $I$ be maximal. By Hölder's Theorem \cite[Thm. 2.3.10]{Steinberg:LatticeModules} $\alg A_p/I$ embeds into $\R_r$ for some $r \in \mathbb R$. This gives us $v \in \Hom(\alg A_p,\R_{r})$. 
Since $p$ is a strong unit of $\alg A$, $p \notin I$ and $v(p) \neq 0$. Since $p$ is negative, also $v(p)\leq 0$.
Therefore $v(p)=r<0$. Lastly, since $\R_{r_1} \cong \R_{r_2}$ for any two negative $r_1,r_2 \in \mathbb R$, we can assume without loss of generality that $v \in \Hom(\alg A_p,\R_{-1})$. 

$(\Leftarrow)$: For every $v \in \Hom(\alg A_p,\R_{-1})$, the subset $v^{-1}(0)$ is an $\ell$-ideal of $\alg{A}_p$ and any $\ell$-subgroup of $\R_{-1}$ is simple, which ensures that the ideal $v^{-1}(0)$ has to be maximal.
\end{proof}

This correspondence between infinitely small elements, maximal $\ell$-ideals and homomorphisms into $\alg{R}_{-1}$ will be important later on, since $\alg{R}_{-1}$ is the algebra of truth-degrees for pointed modal Abelian logic introduced in the next section (in particular, \emph{canonical frames} are structures based on the collections of homomorphisms $\Hom(\alg{A}_p, \alg{R}_{-1})$).     

\section{Pointed Modal Abelian Logic: Frames and Algebras}
\label{Section:SemanticsAndModalAlgebras}
We first introduce relational (Kripke) semantics for pointed modal Abelian logic similar to the real-valued logic of \cite{Diaconescu-Metcalfe-Schnuriger:RealModal} and other many-valued modal logics on crisp frames (Subsection~\ref{subsection:RelationalSemantics}). Next, we introduce the variety of pointed modal Abelian $\ell$-groups and establish some basic identities therein (Subsection~\ref{subsection:ModalAbelianellGroups}). Lastly, we explain how to `shift' between these two frameworks constructing complex algebras (in particular showing algebraic soundness) and canonical frames (Subsection~\ref{subsection:AlgebrasAndFrames}). This sets up the scene for our main results which we prove in the subsequent Section~\ref{section:TruthLemmaCompleteness}. 

\subsection{Relational Semantics}
\label{subsection:RelationalSemantics}
The \emph{language} $\Lang_p$ of pointed modal Abelian logic is defined inductively from a countable set $\Prop = \{ x_1, x_2, x_3, \dots \}$ of \emph{propositional variables}, the signature $\tuple{+, -, \vee, \wedge, 0,p}$ of pointed Abelian $\ell$-groups and the unary modality $\Box$.  
Let $\Fm{\Lang_p}$ denote the set of \emph{formulas} over the language $\Lang_p$. 

\begin{definition}[\sf Model]\label{definition:Model}
A \emph{model} $\mathfrak{M} = \tuple{W, R, \Val}$ consists of a \emph{Kripke frame} $\tuple{W,R}$ (i.e.\ a non-empty set $W$ with a binary relation $R \subseteq W^2$), together with a \emph{bounded propositional valuation}, by which we mean a map
$
\Val\colon W \times \Prop \to \mathbb{R}
$ 
such that for every propositional variable $x_i$ there exists a positive real number $r_i$ with $\{ \Val(u, x_i) \mid u \in W \} \subseteq [-r_i,r_i]$. 
\end{definition}

Note that this slightly differs from \cite{Diaconescu-Metcalfe-Schnuriger:RealModal}, where a propositional valuation is itself bounded. However, it is clear that our condition suffices to ensure that the following interpretation of $\Box$ is (still) well-defined.

\begin{definition}[\sf Satisfaction \& Validity]
Let $\mathfrak{M} = \tuple{W, R, \Val}$ be a model and $u \in W$. We inductively extend the valuation to $\Val \colon W \times \Fm{\Lang_p} \to \mathbb{R}$ by interpreting the non-modal connectives in the algebra $\alg{R}_{-1}$ in the standard way. Modal formulas are interpreted as follows:
$$
\Val(u, \Box\varphi) = \bigwedge_{uRv} \Val(v, \varphi). 
$$
(In particular, here we set $\bigwedge\varnothing := 0$.)
A formula $\varphi$ is \emph{satisfied} at $u \in W$ if and only if 
$$
\Val(u, \varphi) \geq 0
$$
and the formula is \emph{valid}, denoted $\vDash \varphi$, if this holds at every state of every model. 
\end{definition}

The following properties correspond to our axiomatization of pointed modal Abelian $\ell$-groups later on (Definition~\ref{definition:ModalAbelienellGroup}) and essentially yield (algebraic) \emph{soundness} later on.   

\begin{lemma}\label{lemma:Soundness}
For $\mathfrak{M} = \tuple{W, R, \Val}$ a model, $u \in W$, and $\varphi,\psi \in \Fm{\Lang_p}$, the following hold:
\begin{multicols}{2}
    \begin{enumerate}[label=(\roman*)]
        \item $\Val\big(u,\Box (\psi-\varphi)\big) \leq \Val(u,\Box \psi)-\Val(u,\Box \varphi)$,
        \item $\Val(u,\Box \varphi)+\Val(u,\Box \varphi)=\Val\big(u,\Box (\varphi+\varphi)\big)$,
        \item $\Val(u,\Box \varphi) \lor 0=\Val\big(u,\Box (\varphi\lor 0)\big)$,
        \item $\Val(u,\Box \varphi) \land \Val(u,\Box \psi)=\Val\big(u,\Box(\varphi \land \psi)\big)$,
    \item $\Val(u, |\Box p-p| \land |\Box \varphi| )=0$,
    \item $\Val(u, -\Box p)=\Val(u, \Box {-p})$.
    \end{enumerate}
\end{multicols}   
\end{lemma}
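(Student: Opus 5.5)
The plan is to reduce each item to an elementary fact about infima of bounded families of reals. Fix $u \in W$ and write $S_u = \{v \mid uRv\}$. The first step is to check that for every formula $\varphi$ the map $v \mapsto \Val(v,\varphi)$ is bounded on $W$, so that every $\bigwedge$ appearing is a genuine real number. This is a routine induction on $\varphi$. Variables are bounded by Definition~\ref{definition:Model}. The constant $p$ is interpreted as $-1$, and $0$ is trivially bounded. The operations $+,-,\vee,\wedge$ preserve boundedness. If $|\Val(\cdot,\varphi)|\le r$, then $\Val(\cdot,\Box\varphi) \in [-r,r]$ as well, using the convention $\bigwedge\varnothing=0$. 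Next I split into two cases. If $S_u=\varnothing$, every $\Box$-value at $u$ is $0$, and each item reduces to a trivial identity or inequality about $0$. For example, (v) becomes $|0-(-1)|\wedge|0| = 0$, and (vi) becomes $-0=0$. So from now on I assume $S_u\neq\varnothing$, write $f(v)=\Val(v,\varphi)$ and $g(v)=\Val(v,\psi)$, and read $\inf$ as taken over $S_u$.

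Items (ii)--(iv) and (vi) are standard infimum identities. For (ii), $\inf(2f)=2\inf f$ because multiplication by a positive scalar preserves order. For (iii), $\inf(f\vee 0)=(\inf f)\vee 0$ because $t\mapsto t\vee 0$ is monotone and continuous, or more directly by checking the two cases $\inf f\ge 0$ and $\inf f<0$. For (iv), $\inf(f\wedge g)=\inf f\wedge \inf g$ holds since both sides are the infimum of the union of the two value sets. For (vi), with $S_u\neq\varnothing$, the formula $p$ takes the constant value $-1$, so $\Box p$ has value $-1$ and $\Box(-p)$ has value $1$. Hence both sides of (vi) equal $1$.

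Item (i) comes from the inequality $\inf(g-f)+\inf f\le \inf g$. To prove it, note that for each $v\in S_u$ we have $\inf(g-f)+\inf f \le (g(v)-f(v))+f(v)=g(v)$, and then take the infimum over $v$. Rearranging gives (i).

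Item (v) is the only one that uses the case split essentially. When $S_u\neq\varnothing$, we have $\Val(u,\Box p)=-1=\Val(u,p)$, so $|\Box p - p|$ evaluates to $0$ and the meet with $|\Box\varphi|\ge 0$ is $0$. When $S_u=\varnothing$, $\Box\varphi$ evaluates to $0$, so the meet is again $0$. I expect no real obstacle in any of this. The only points needing care are the empty-successor convention, which breaks the identity $\Box p = p$ (it explains why (v) is stated in this guarded form and why (vi) is stated with negations), and the boundedness induction that makes all the infima finite.
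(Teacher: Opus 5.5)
Your proof is correct and takes essentially the same route as the paper. You use pointwise infimum identities in $\mathbb{R}$ for (i)--(iv), and a case split on whether $u$ has an $R$-successor for (v) and (vi). The only differences are presentational: you write out the boundedness induction that the paper leaves as a remark after Definition~\ref{definition:Model}, and you apply the empty/non-empty split to every item rather than only to (v)--(vi).
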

\begin{proof}
\emph{(i)}: By substituting $\chi=\psi-\varphi$ and adding $\Val(u,\Box \varphi)$ to both sides of the inequation, it suffices to prove 
    $\Val(u,\Box \chi)+ \Val(u,\Box \varphi)\leq \Val(u,\Box (\chi+\varphi))$, or equivalently
    $$\bigwedge_{uRv} \Val(v,\chi)+ \bigwedge_{uRv} \Val(v, \varphi) \leq \bigwedge_{uRv} \Val(v,\chi+\varphi).$$
This follows, since we have $\bigwedge_{uRv} \Val(v,\chi)+ \bigwedge_{uRv} \Val(v, \varphi) \leq  \Val(v,\chi+\varphi)$ for each $uRv$.
\\
\emph{(ii)}: Directly  follows from the identity $\bigwedge_{i\in I} r_i + \bigwedge_{i\in I} r_i = \bigwedge_{i\in I} r_i + r_i$ of $\alg{R}_{-1}$. 
\\    
\emph{(iii)}: Directly follows from the infinite distributivity law of $\ell$-groups, see \cite[Eq. 1.1.11]{Anderson-Feil:LatticeOrderedGroups}.
\\
\emph{(iv)}: Directly follows from the law $\bigwedge_{i \in I}r_i \wedge \bigwedge_{i \in I} s_i = \bigwedge_{i\in I} r_i \wedge s_i$ of $\alg{R}_{-1}$.
\\
\emph{(v) \& (vi)}: We distinguish cases (a) $\exists v\colon uRv$ and (b) $\not\exists v \colon uRv$. 
In case (a) note $\Val(u,\Box p)=-1=\Val(u,p)$ and thus $\Val(u, |\Box p-p| \land |\Box \varphi| )=0$ and $\Val(u,-\Box p)=1=\Val(u,\Box - p)$.
In case (b) we have $\Val(u,\Box \varphi)= \bigwedge \varnothing = 0$, thus $\Val(u, |\Box p-p| \land |\Box \varphi| )=0$. Furthermore, $\Val(u, -\Box p)=  0 = \Val(u, \Box {-p})$.
\end{proof}

\subsection{Modal Abelian \texorpdfstring{$\ell$-groups}{l-groups}}
\label{subsection:ModalAbelianellGroups}

We aim to relate the relational semantics introduced in the previous subsection to the variety of pointed Abelian $\ell$-groups with unary operators defined as follows.

\begin{definition}[\textsf{Modal Abelian $\ell$-group}]\label{definition:ModalAbelienellGroup}
An algebra $\tuple{\alg A,\Box}$ is a \emph{modal Abelian $\ell$-group} if $\alg A$ is an Abelian $\ell$-group and $\Box\colon A \to A$ is a unary operation which satisfies for each $x,y \in A$ the following:
\begin{multicols}{2}
\begin{enumerate}[label=(\textsf{A\arabic*})]
    \item\label{A1} $\Box (y-x) \leq \Box y - \Box x,$
    \item\label{A2} $\Box(x + x) = \Box x + \Box x,$ 
    \item\label{A3} $\Box(x \lor 0)=\Box x \lor 0,$ \label{a:Join0}
    \item\label{A4} $\Box (x \land 0) = \Box x \land 0.$ \label{a:Meet}
\end{enumerate}
\end{multicols}
If $\alg A_p$ is a pointed Abelian $\ell$-group, we say that $\tuple{\alg A_p, \Box}$ is a \emph{pointed modal Abelian $\ell$-group} if it furthermore satisfies the following: 
\begin{multicols}{2}
\begin{enumerate}[label=(\textsf{A\arabic*})]\setcounter{enumi}{4}
    \item\label{A5} $|\Box x| \land |\Box p-p|=0$, \label{a:p}
    \item\label{A6} $\Box {-p}=-\Box p$. \label{a:-p}
\end{enumerate}
\end{multicols}
We denote the varieties of modal Abelian $\ell$-groups and of negatively pointed modal Abelian $\ell$-groups by $\mAb$ and $\pmAb$, respectively.
A pointed modal Abelian $\ell$-group $\tuple{\alg A_p, \Box}$ is \emph{strongly negatively pointed} if $\alg A_p$ is strongly negatively pointed.
\end{definition}

\begin{example}
    Let $\alg A_p$ be a pointed Abelian $\ell$-group. There are always at least two `trivial' ways to expand this structure to a modal pointed Abelian $\ell$-group, namely the identity $\Box_1 x := x$ and the constant zero $\Box_2 x := 0$. It is straightforward to verify that both $\tuple{\alg A_p, \Box_1}$ and $\tuple{\alg A_p, \Box_2}$ satisfy axioms \ref{A1}--\ref{A6}. 
\end{example}

Clearly $\pmAb$ is a conservative expansion of $\mAb$, as in every $\alg{A} \in \mAb$ one can interpret the additional constant symbol as $p^\alg{A}\coloneq 0^\alg{A}$. Then Axioms \ref{a:p}--\ref{a:-p} are trivially satisfied by (i) in the following.

\begin{lemma} \label{l:modal basics}
        The following hold in all modal Abelian $\ell$-groups.
    \begin{multicols}{2}
    \begin{enumerate}[label=(\roman*)]
        \item  $\Box0=0$, \label{a:0}
        \item  $\Box (2^k \cdot x) = 2^k \cdot \Box x$ for each $k \in \mathbb N$, \label{a:density}
        \item  $x \geq 0 \Rightarrow\Box x \geq 0$, 
        \item  $x \leq y \Rightarrow\Box x \leq \Box y$. 
    \end{enumerate}
    \end{multicols}
Moreover, the following hold in all pointed modal Abelian $\ell$-groups.
\begin{multicols}{2}
\begin{enumerate}[label=(\roman*)]\setcounter{enumi}{4}
    \item $\Box(x+p)=\Box x+ \Box p$ and \\ $\Box(x-p)=\Box x-\Box p,$
    \item $\Box(x+z \cdot p)=\Box x+z \cdot \Box p$ for each $z \in \mathbb Z$, \label{a:adding p}
    \item $\Box(x \lor p)=\Box x \lor \Box p$,
    \item $p \leq \Box p$, 
    \item $\Box (-k \cdot p)=-\Box (k \cdot p)$ for each $k \in \mathbb N$.
\end{enumerate}
\end{multicols}    
\end{lemma}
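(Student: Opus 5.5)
The plan is to derive each item directly from the axioms \ref{A1}--\ref{A6}, in the order listed, reusing earlier items as they become available.

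\emph{Items (i)--(iv).} For (i), apply \ref{A1} with $x=y$ to get $\Box 0\leq \Box x-\Box x=0$. Then use \ref{A3} with $x=0$ to get $\Box 0=\Box 0\lor 0\geq 0$, hence $\Box 0=0$. Item (ii) follows by induction on $k$ from \ref{A2}. For (iii), if $x\geq 0$ then $x=x\lor 0$, so $\Box x=\Box(x\lor 0)=\Box x\lor 0\geq 0$ by \ref{A3}. For (iv), the plan is to show that $\Box$ is superadditive. Indeed \ref{A1}, applied to $y:=u+v$ and $x:=v$, gives $\Box u\leq \Box(u+v)-\Box v$, i.e.\ $\Box u+\Box v\leq \Box(u+v)$. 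Now if $x\leq y$, write $y=x+(y-x)$ with $y-x\geq 0$. Then $\Box y\geq \Box x+\Box(y-x)\geq \Box x$ by (iii).

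\emph{Items (v) and (vi).} These are the pointed items, and the key is \ref{A5}. Superadditivity already gives $\Box x+\Box p\leq \Box(x+p)$. For the reverse inequality, \ref{A1} with $y:=x+p$ and $x:=p$ yields $\Box x\leq \Box(x+p)-\Box p$; this is again only the same direction, so \ref{A5} must be used. I would argue via the subdirect representation: every Abelian $\ell$-group embeds into a product of totally ordered ones (the quotients by prime $\ell$-ideals), but $\Box$ need not respect these quotients. So instead I would argue equationally. Put $d:=\Box(x+p)-\Box x-\Box p\geq 0$. Using \ref{A6} together with superadditivity applied to $-p$, we get $\Box(x+p)+\Box(-p)\leq \Box x$, i.e.\ $\Box(x+p)-\Box p\leq \Box x$. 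Hence $d\leq 0$, so $d=0$. This avoids \ref{A5} entirely.

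The same trick handles the second identity. Superadditivity gives $\Box(x-p)+\Box p\leq \Box x$. Conversely, $\Box(x-p)\geq \Box x+\Box(-p)=\Box x-\Box p$ by \ref{A6}. Item (vi) then follows by induction on $|z|$ from (v).

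\emph{Items (vii)--(ix).} For (vii), write $x\lor p=(x-p)\lor 0+p$. Then by (v) and \ref{A3},
\[
\Box(x\lor p)=\Box((x-p)\lor 0)+\Box p=\big(\Box x-\Box p\big)\lor 0+\Box p=\Box x\lor \Box p.
\]
Item (ix) is immediate from (vi) with $x=0$ and (i), since both sides equal $\mp k\cdot\Box p$.

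Item (viii) is where \ref{A5} is really needed, and I expect it to be the main obstacle. First, negativity $p\leq 0$ gives $\Box p\leq 0$ by (iv) and (i). Set $c:=\Box p-p$; the goal is $c\geq 0$. By (vii), $\Box(p\lor p)=\Box p$ gives nothing new. Instead, consider $x:=-p$ in \ref{A5}: $|\Box(-p)|=|\Box p|$, so $|\Box p|\land|c|=0$. I would then work in a totally ordered quotient, since this is legitimate for the non-modal conclusion $c\geq 0$ from the non-modal facts $|\Box p|\land|\Box p-p|=0$, $p\leq 0$ and $\Box p\leq 0$. In a chain, either $\Box p=0$, and then $c=-p\geq 0$; or $c=0$. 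In both cases $c\geq 0$. Since the implication is a universal Horn consequence of identities valid in all totally ordered Abelian $\ell$-groups (every Abelian $\ell$-group is a subdirect product of chains), it holds in general.
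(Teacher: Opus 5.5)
Your proposal is correct. For items (i)--(vii) and (ix) it is essentially the paper's argument. Your proof of (v) (superadditivity applied to $x+p$ and $-p$, combined with \ref{A6}) is the paper's chain $\Box(x+p)\leq\Box x-\Box(-p)=\Box x+\Box p\leq\Box(x+p)$. The small deviations, namely obtaining $\Box 0=0$ from \ref{A1} and \ref{A3} rather than from \ref{A2}, and routing (iv) through superadditivity, are harmless. Note that, contrary to your intermediate remark, your final argument for (v) rightly uses no \ref{A5}.

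The genuine difference is in (viii). The paper stays purely equational. Via the triangle inequality and Lemma~\ref{l:identities}(iv) it rewrites \ref{A5} as $|\Box p|=(|\Box p|-|\Box p-p|)\lor 0\leq |p|$. It then shows $|\Box p|=-\Box p$ using \ref{A6} and (vii), whence $-\Box p\leq -p$. You instead isolate a fact about two elements of the $\ell$-group reduct: $|q|\land|q-p|=0$ and $p\leq 0$ imply $q-p\geq 0$. Here $q:=\Box p$ is treated as an ordinary element, so it is irrelevant that prime quotients need not respect $\Box$. You check this by two cases in chains and transfer it, because quasi-identities are preserved under subalgebras of products and every Abelian $\ell$-group is a subdirect product of chains.

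Your route is more transparent and shows that (viii) needs only \ref{A5} at $x:=p$ together with $p\leq 0$. So your detour through $x:=-p$ and \ref{A6} is unnecessary, and $\Box p\leq 0$ is never used in your case analysis. The paper's route avoids the representation theorem and stays within the identities of Lemma~\ref{l:identities}. Incidentally, your one-line observation $\Box p\leq\Box 0=0$ from (iv) and (i) would shorten the second half of the paper's computation, which obtains $\Box p\leq 0$ via \ref{A6} and (vii).
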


\begin{proof}
 \emph{(i)}: Follows from \ref{A2} with $x=0$. 
 \\
\emph{(ii)}: Follows from iterating \ref{A2}.
\\
\emph{(iii)}: Follows from \ref{A3}.
\\
\emph{(iv)}: If $x \leq y$, we have $0 \leq y-x$ and thus $0 \leq \Box(y-x)$ by the previous item. Therefore $0 \leq \Box(y-x) \leq \Box y - \Box x$ by \ref{A1}, which implies $\Box x \leq \Box y$.
\\
 \emph{(v)}:  Using \ref{A1} and \ref{a:-p} we compute 
    $$
    \Box(x+p) \leq \Box x-\Box(-p)=\Box x+\Box p=\Box(x+p-p)+\Box p \leq \Box(x+p) -\Box p +\Box p=\Box(x+p).
    $$
    Therefore, $\Box(x+p)=\Box x+\Box p$ for each $x \in A$.
    Consequently by substitution, $\Box x=\Box(x-p+p)=\Box (x-p)+\Box p$ for each $x \in A$.
\\
\emph{(vi)}: Follows from iterating the previous item.
\\
 \emph{(vii)}: Using \ref{A3}, (vi) and Lemma~\ref{l:identities} (iii) we compute 
    \begin{align*}
        \Box(x \lor p)&=\Box\big(((x-p) \lor 0)+p\big) & \text{(Lemma~\ref{l:identities}(iii))}\\
        &=\Box\big((x-p) \lor 0\big)+\Box p & \text{(Item (v))}\\
        &=\big(\Box(x-p) \lor 0\big) +\Box p & \text{(\ref{A3} \& Item (v))}\\
        &=\big(\Box(x-p)+\Box p\big) \lor \Box p=\Box x \lor \Box p.& \text{(Lemma~\ref{l:identities}(iii) \& Item (v))}
    \end{align*}
\emph{(viii)}: Due to Lemma~\ref{l:identities} (iv) \& (v) we compute that $(|\Box p|-|\Box p-p|) \leq |p|$ and furthermore that $|\Box p|-|\Box p| \land |\Box p-p|=(|\Box p|-|\Box p-p|) \lor 0$.
    Therefore by \ref{A5} we obtain
    $$|\Box p|=|\Box p|-|\Box p| \land |\Box p-p|=(|\Box p|-|\Box p-p|) \lor 0\leq |p|.$$
    Consequently, by applying \ref{A6},(vii) and $p \leq 0$ compute $$|\Box p|=-\Box p \lor \Box p=\Box (-p) \lor \Box p=\Box(-p \lor p)=\Box |p|=\Box (-p)=-\Box p$$
    which shows $p \leq \Box p$ as desired.
\\    
\emph{(ix)}: Is a straightforward consequence of (i) and (vi) with $x = 0$. 
\end{proof}

\begin{remark}
    As already discussed in Lemma~\ref{lemma:Soundness}, by applying subtraction and substitution, it is easy to see that axiom \ref{A1} can be equivalently replaced with
    \begin{equation} \tag{\textsf{A1}$\ast$} \label{e:superadditivity}
        \Box x + \Box y \leq \Box(x+y).
    \end{equation}

Let us also note that we could replace \ref{A5} and \ref{A6} with a single axiom
    \begin{equation} \tag{\textsf{A5}$\ast$}
        |\Box x| \land |\Box(x+x-p)-\Box x+\Box x -p| =0
    \end{equation}
Although this axiom is harder to verify, together with \ref{A2} it provides an equational counterpart to the modal $\mathsf{MV}$-algebra axioms $\Box(x \oplus x)=\Box x \oplus \Box x$ and $\Box(x \odot x)=\Box x \odot \Box x$ used in \cite{HansoulTeheux:ModalLukasiewicz}.
\end{remark}

\subsection{Complex Algebras and Canonical Frames}
\label{subsection:AlgebrasAndFrames}

As usual, to `shift' between the relational and algebraic frameworks introduced in the previous two subsections, we define complex algebras and canonical frames, starting with the former.

\begin{definition}[\sf Complex Algebra]\label{definition:ComplexAlgebra}
Let $\tuple{W,R}$ be a Kripke frame. The corresponding \emph{complex algebra} is given by $\mathfrak{C}_{(W,R)} = \langle b\alg{R}_{-1}^W, \Box_R \rangle$ where $b\alg{R}_{-1}^W \leq \alg{R}_{-1}^W$ denotes the subalgebra of the product $\alg{R}_{-1}^W$ defined by the subuniverse of \emph{bounded maps} $\{ f \colon W \to \mathbb{R} \text{ bounded}\}$ and the operator $\Box_R$ is defined by 
$$
(\Box_Rf)(u) = \bigwedge_{uRv} f(v).
$$
(Note that $\Box f$ is a well-defined bounded map if $f$ is bounded.)
\end{definition}

It is clear from Lemma~\ref{lemma:Soundness} that all complex algebras are pointed modal Abelian $\ell$-groups. Furthermore, the map of constant value $-1$ is a strong unit. From this we get the following. 

\begin{proposition}[\sf Algebraic Soundness]\label{proposition:SoundnessFullVariety}
If $\pmAb \vDash \varphi \geq 0$ then $\vDash \varphi$. 
\end{proposition}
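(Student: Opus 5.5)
We argue by contraposition, passing from a falsifying Kripke model to a complex algebra. Suppose $\not\vDash \varphi$, so there are a model $\mathfrak{M} = \tuple{W,R,\Val}$ and a state $u \in W$ with $\Val(u,\varphi) < 0$. We exhibit a member of $\pmAb$ and an assignment in it under which the term $\varphi$ is not $\geq 0$. The natural candidate is the complex algebra $\mathfrak{C}_{(W,R)} = \langle b\alg{R}_{-1}^W, \Box_R\rangle$ of Definition~\ref{definition:ComplexAlgebra}, with the constant $p$ interpreted as the map of constant value $-1$.

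First we check that $\mathfrak{C}_{(W,R)} \in \pmAb$. The reduct $b\alg{R}_{-1}^W$ is an Abelian $\ell$-group, since it is a subalgebra of a power of $\alg{R}_{-1}$: bounded maps are closed under $+,-,\lor,\land$ and contain the constants $0$ and $-1$. The point $p$ is negative because $-1 \leq 0$ pointwise. For axioms \ref{A1}--\ref{A6}, note that every bounded map $f$ is of the form $\Val'(\cdot,x_1)$ for the bounded valuation $\Val'$ that sends $x_1$ to $f$. Evaluating $\Box x$ at $f$ is then exactly $\Val'(\cdot,\Box x_1)$, and similarly for two arguments. So the pointwise instances of Lemma~\ref{lemma:Soundness}(i)--(vi), taken at every state, give \ref{A1}--\ref{A6}, where \ref{A3} and \ref{A4} come from (iii) and (iv) with $\psi := 0$. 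We also need well-definedness: if $f$ is bounded by $r$, then $\Box_R f$ takes values in $[-r,r]$ (or equals $0$ when $u$ has no successor), so $\Box_R f$ is bounded.

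Next, define the algebraic assignment $e(x_i) := \Val(\cdot,x_i) \in b\alg{R}_{-1}^W$, which is legitimate precisely because valuations are bounded in the sense of Definition~\ref{definition:Model}. By induction on formulas we show $e(\psi) = \Val(\cdot,\psi)$ for every $\psi \in \Fm{\Lang_p}$. The non-modal connectives and the constants $0$ and $p$ are computed pointwise in $\alg{R}_{-1}$ on both sides. The $\Box$ case holds because $(\Box_R e(\psi))(w) = \bigwedge_{wRv} \Val(v,\psi) = \Val(w,\Box\psi)$, including the convention $\bigwedge\varnothing = 0$ on both sides.

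Finally, $e(\varphi)(u) = \Val(u,\varphi) < 0$, so $e(\varphi) \not\geq 0$ in $\mathfrak{C}_{(W,R)}$. Hence $\mathfrak{C}_{(W,R)} \not\vDash \varphi \geq 0$, and therefore $\pmAb \not\vDash \varphi \geq 0$. The only step needing care is the membership $\mathfrak{C}_{(W,R)} \in \pmAb$, namely the transfer of Lemma~\ref{lemma:Soundness} from formulas to arbitrary bounded maps. The single-variable valuation trick above handles this, and everything else is routine.
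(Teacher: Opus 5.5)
Your proposal is correct and is essentially the paper's proof. Both argue by contraposition via the complex algebra $\mathfrak{C}_{(W,R)}$ with the assignment $e(x_i)=\Val(\cdot,x_i)$, and both take membership $\mathfrak{C}_{(W,R)}\in\pmAb$ from Lemma~\ref{lemma:Soundness}. You spell out two points the paper leaves implicit: the one-variable valuation trick that transfers Lemma~\ref{lemma:Soundness} to arbitrary bounded maps, and the induction $e(\psi)=\Val(\cdot,\psi)$. One small correction: \ref{A3} is item (iii) directly, while \ref{A4} is item (iv) with $\psi:=0$ together with $\Val(w,\Box 0)=0$.
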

\begin{proof}
If $\not\vDash\varphi$ then there is a model $\tuple{W,R,\Val}$ such that $\Val(u, \varphi) < 0$. Then $e_\Val \colon \Prop \to \mathfrak{C}_{\tuple{W,R}}$ defined by $e_\Val(x_i) = \Val(-,x_i)$ is an assignment under which $\varphi \geq 0$ fails in the complex algebra $\mathfrak{C}_{\tuple{W,R}}$.
\end{proof}

Similarly, we move from $\pmAb$-algebras to Kripke frames via the following construction based on the set of all homomorphisms into our algebra of truth-values $\alg{R}_{-1}$.   

\begin{definition}[\sf Canonical Frame]\label{definition:CanonicalFrames}
    Let $\tuple{\alg{A}_p,\Box}$ be a strongly negatively pointed modal Abelian $\ell$-group and let $u,v \in \Hom(\alg{A}_p,\R_{-1})$ be homomorphisms. We define 
    $$
    uR_\Box v \quad \text{if and only if} \quad u(\Box a) \geq 0 \Rightarrow v(a) \geq 0 \text{ for all } a \in \alg{A}_p.
    $$
We call $\mathfrak F_{\tuple{\alg{A}_p,\Box}}:=\tuple{\Hom(\alg{A}_p,\R_{-1}),R_\Box}$ the \emph{canonical frame} of $\tuple{\alg{A}_p,\Box}$.
\end{definition}

Note that from any \emph{algebraic evaluation} $e \colon \Prop \to \alg{A}_p$ we can define a propositional valuation $\Val_e \colon \Hom(\alg{A}_p, \alg{R}_{-1}) \times \Prop \to \mathbb{R}$ on the canonical frame $\mathfrak{F}_{(\alg{A}_p, \Box)}$ via $\Val_e(u,x_i) = u(e(x_i))$. However, these are \emph{not} necessarily bounded valuations, that is, they do not necessarily define models according to Definition~\ref{definition:Model}. However, note that these valuations are bounded if $p$ is a \emph{negative strong unit}, since then $e(x_i) \in \alg{A}_p$ is bounded by $n\cdot p \leq e(x_i) \leq -n \cdot p$ for some $n \in \mathbb{N}$. Therefore, every homomorphism $u \colon \alg{A}_p \to \alg{R}_{-1}$ satisfies $u(e(x_i)) \in [-n, n]$. In the next section, we prove that the corresponding Truth Lemma also holds for strongly negatively pointed modal Abelian $\ell$-groups. 

\section{Pointed Modal Abelian Logic, Algebraically}
\label{section:TruthLemmaCompleteness}
In Subsection~\ref{subsection:TruthLemma}, we establish the Truth Lemma with respect to the constructions of the previous subsection for \emph{strongly} negatively pointed modal Abelian $\ell$-groups (Corollary~\ref{Lemma:TruthLemma}). For this, we introduce some terms $t_d$ which can `separate' elements of $\alg{R}_{-1}$. 

In Subsection~\ref{subsection:TowardsCompleteness}, we introduce an infinitary derivability relation with respect to $\pmAb$ and prove completeness of real-valued modal logic with respect to this relation (Theorem~\ref{t:completeness}).  

\subsection{The Truth Lemma}
\label{subsection:TruthLemma}
First let us note that the following is easily derived from \ref{A5} and the fact that $\R_{-1}$ is totally ordered.
\begin{lemma} \label{l:dichotomy u}
    Let $\tuple{\alg{A}_p, \Box}$ be a pointed modal Abelian $\ell$-group and let $u \in \Hom(\alg{A}_p,\R_{-1})$ be a homomorphism. Then $u(\Box a)=0$ for each $a \in \alg{A}_p$ or $u(\Box p)=u(p) = -1$.
\end{lemma}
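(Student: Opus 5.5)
The plan is to apply $u$ to axiom \ref{A5} and then use that $\R_{-1}$ is a chain. Since $u$ is a homomorphism of pointed Abelian $\ell$-groups, it preserves $+$, $-$, $\lor$, $\land$, $0$ and $p$, and hence also $|\cdot|$. Applying $u$ to \ref{A5} therefore gives, for every $a \in \alg{A}_p$,
$$
|u(\Box a)| \land |u(\Box p) - u(p)| = 0 \quad\text{in } \R_{-1}.
$$
In $\R_{-1}$ the meet of two reals is their minimum, and both absolute values are non-negative. So for each $a$, either $u(\Box a)=0$ or $u(\Box p)=u(p)$. Since $u(p)=-1$, the second alternative reads $u(\Box p)=u(p)=-1$.

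Next, the choice between the two alternatives must be made uniformly in $a$. The second alternative, $u(\Box p)=-1$, does not depend on $a$. So I will argue by cases. If $u(\Box p)\neq -1$, then for every $a$ the second alternative fails, and hence $u(\Box a)=0$ for all $a\in\alg{A}_p$. Otherwise $u(\Box p)=-1=u(p)$. This establishes the dichotomy.

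There is no real obstacle. The only points to check are that homomorphisms commute with $|\cdot|$, since $|x|=x\lor -x$ is a term, and that in a totally ordered group $x\land y=0$ with $x,y\geq 0$ forces $x=0$ or $y=0$.
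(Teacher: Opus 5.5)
Your proof is correct and follows the paper's approach: the paper only remarks that the lemma follows from \ref{A5} and the fact that $\R_{-1}$ is totally ordered, and you spell out exactly that. This means applying $u$ to \ref{A5}, using that $u(p)=-1$, and then observing that the second alternative does not depend on $a$, so the disjunction holds uniformly.
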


The set $ D:=\{\frac{z}{2^k} \mid z \in \mathbb Z, k \in \mathbb N\}$ is a dense subset of $\mathbb R$. For each $d \in D$ we fix some representation $d = \frac{z}{2^k}$ and define $t_d$ to be a unary term  
$$t_d(x)=(2^k \cdot x + z \cdot p) \land 0.$$ 
The following summarizes the main properties of these terms for our purposes.
\begin{lemma} \label{l:term separator}
The terms $t_d$ defined above satisfy the following.  
    \begin{enumerate}[label=(\roman*)]
        \item  $t_d^{\alg{A}_p}(x) \leq 0$ for every $x \in {\alg{A}_p}$. 
        \item $t_d^{\R_{-1}}(r) = 0$ iff $r \geq d$.

        \item Assume $u(\Box y) \neq 0$ for some $y \in \alg{A}_p$. Then
        $u \big(t_d(\Box x)\big)=u \big(\Box t_d(x)\big)$.
        
    \end{enumerate}
\end{lemma}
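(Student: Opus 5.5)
Items (i) and (ii) are direct computations, and (iii) comes from pushing $\Box$ through each operation in $t_d$ using Lemma~\ref{l:modal basics}, then evaluating under $u$ with the help of Lemma~\ref{l:dichotomy u}.

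For (i), $t_d(x)=(2^k\cdot x+z\cdot p)\land 0\leq 0$ holds in any lattice.

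For (ii), in $\R_{-1}$ we have $p=-1$, so $t_d(r)=(2^k r - z)\land 0$. This is $0$ iff $2^k r - z\geq 0$, i.e.\ iff $r\geq z/2^k=d$. The value of $d$ does not depend on the chosen representation.

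For (iii), I would first compute $\Box t_d(x)$ inside the algebra:
\begin{align*}
\Box t_d(x) &= \Box\big((2^k\cdot x+z\cdot p)\land 0\big) = \Box(2^k\cdot x+z\cdot p)\land 0 && \text{(\ref{A4})}\\
&= \big(\Box(2^k\cdot x)+z\cdot\Box p\big)\land 0 && \text{(Lemma~\ref{l:modal basics}\ref{a:adding p})}\\
&= \big(2^k\cdot\Box x+z\cdot\Box p\big)\land 0 && \text{(Lemma~\ref{l:modal basics}\ref{a:density})}.
\end{align*}
On the other hand, $t_d(\Box x)=(2^k\cdot\Box x+z\cdot p)\land 0$. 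So the two sides differ only in having $\Box p$ versus $p$. This is an identity in $\pmAb$ up to that single substitution, and both Lemma~\ref{l:modal basics}\ref{a:adding p} and \ref{a:density} hold in all pointed modal Abelian $\ell$-groups.

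Next I apply the homomorphism $u$. Since $u(\Box y)\neq 0$ for some $y$, Lemma~\ref{l:dichotomy u} gives $u(\Box p)=u(p)=-1$. Hence
$$u(\Box t_d(x)) = (2^k u(\Box x) - z)\land 0 = u(t_d(\Box x)),$$
since $u$ preserves $+$, $\land$, $0$ and integer multiples.

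The only step needing care is the rewriting $\Box(2^k\cdot x+z\cdot p)=2^k\cdot\Box x+z\cdot\Box p$. It is exactly why $t_d$ uses dyadic coefficients: $\Box$ is only known to commute with multiplication by powers of $2$ (via \ref{A2}) and with adding integer multiples of $p$ (via \ref{A6}), not with arbitrary integer multiples. Both facts are already established in Lemma~\ref{l:modal basics}, so I expect no real obstacle beyond invoking them in the right order.
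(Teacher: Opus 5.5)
Your proposal is correct and essentially matches the paper's proof. Both use the same ingredients: axiom (\textsf{A4}), Lemma~\ref{l:modal basics}(ii) and (vi), and Lemma~\ref{l:dichotomy u} to get $u(\Box p)=u(p)=-1$. The only difference is ordering: you first derive $\Box t_d(x)=(2^k\cdot\Box x+z\cdot\Box p)\land 0$ in the algebra and then apply $u$, whereas the paper runs a single chain of equalities under $u$.
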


\begin{proof}
    Let $d=\frac{z}{2^k}$ be the representation chosen in the definition of $t_d$. By definition, \emph{(i)} is obvious. 
    \emph{(ii)}: Observe that for each $r \in \mathbb R$ we have $t_d(r) \geq 0$ iff $2^k \cdot r + z \cdot (-1) \geq 0$, which is equivalent to $r-d\geq 0$.
\\
    \emph{(iii)}: First observe that by Lemma \ref{l:dichotomy u} and the assumption we obtain $u(\Box p)=u(p)$. Compute
    \begin{align*}
        u \big(t_d(\Box x)\big)&=u\big((2^k \cdot \Box x + z \cdot p) \land 0\big) &\text{(Def. $t_d$)} \\
        &=\big(u(2^k \cdot \Box x)+z \cdot u(p)\big) \land 0 &\text{($u$ homomorphism)}\\ 
        &=\big(u(2^k \cdot \Box x)+z \cdot u(\Box p)\big) \land 0 &\text{($u(\Box p) = p$)}\\
        &=u\big((2^k \cdot \Box x + z \cdot \Box p) \land 0\big)&\text{($u$ homomorphism)}\\
        &=u\big((\Box (2^k \cdot x) + z \cdot \Box p) \land 0\big)&\text{(Lemma~\ref{l:modal basics}(ii)) }\\
        &=u\big(\Box ( 2^k \cdot x + z \cdot p) \land 0\big) &\text{(Lemma~\ref{l:modal basics}(vi))}\\ 
        &=u\big(\Box (( 2^k \cdot x + z \cdot p) \land 0)\big)= u\big(\Box t_d (x)\big) &\text{(Axioms \ref{A1},\ref{A4} \& Def. $t_d$)}
    \end{align*}    
    as desired, finishing the proof.
\end{proof}

Among others, we use these terms to prove the following (recall the Definition~\ref{definition:CanonicalFrames} of the canonical frame $\tuple{\Hom(\alg{A}_p, \R_{-1}), R_\Box}$).  

\begin{lemma} \label{l:uRv}
    Let $\tuple{\alg{A}_p, \Box} \in \pmAb$ be strongly negatively pointed and $u,v \in \Hom(\alg{A}_p,\R_{-1})$. The following conditions are equivalent:

    \begin{enumerate} [label=(\arabic*)]
        \item $uR_\Box v$,
        \item  $u(\Box a)=0 \Rightarrow v(a) = 0$ for all $a \leq0$,
        \item $\Box^{-1} u^{-1}(0) \cap \{a \in A \colon a \leq 0\} \subseteq v^{-1}(0)$,
        \item $u(\Box a) \leq v(a)$ for all $a \in A$.
 
    \end{enumerate}
\end{lemma}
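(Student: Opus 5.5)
The plan is to prove the cycle $(1)\Rightarrow(2)\Leftrightarrow(3)$, $(2)\Rightarrow(4)$, $(4)\Rightarrow(1)$. Most of these are direct unfoldings of definitions. The only substantial step is $(2)\Rightarrow(4)$, where the dyadic separating terms $t_d$ of Lemma~\ref{l:term separator} are used.

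\emph{Easy implications.} For $(1)\Rightarrow(2)$, take $a\leq 0$ with $u(\Box a)=0$. Then $u(\Box a)\geq 0$, so $uR_\Box v$ gives $v(a)\geq 0$. Since $a\leq 0$ and $v$ is monotone, also $v(a)\leq 0$, hence $v(a)=0$. The equivalence $(2)\Leftrightarrow(3)$ is a literal restatement: $a\in\Box^{-1}u^{-1}(0)$ means exactly $u(\Box a)=0$. For $(4)\Rightarrow(1)$, if $u(\Box a)\geq 0$ then $v(a)\geq u(\Box a)\geq 0$.

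\emph{Main step $(2)\Rightarrow(4)$.} First I will split cases via Lemma~\ref{l:dichotomy u}. If $u(\Box y)=0$ for all $y$, apply (2) to $a=p\leq 0$. Then $u(\Box p)=0$ would force $v(p)=0$, contradicting $v(p)=-1$. So (2) can only hold when $u(\Box p)=u(p)=-1$, and in particular $u(\Box y)\neq 0$ for some $y$. This makes Lemma~\ref{l:term separator}(iii) available.

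Now suppose, towards a contradiction, that $v(a)<u(\Box a)$ for some $a\in A$. By density of $D$ in $\mathbb R$, choose $d\in D$ with $v(a)<d<u(\Box a)$. Since homomorphisms commute with terms, $u(t_d(\Box a))=t_d^{\R_{-1}}(u(\Box a))$. This equals $0$ by Lemma~\ref{l:term separator}(ii), because $u(\Box a)\geq d$. By Lemma~\ref{l:term separator}(iii), $u(\Box t_d(a))=u(t_d(\Box a))=0$. Moreover $t_d(a)\leq 0$ by Lemma~\ref{l:term separator}(i), so hypothesis (2) applied to $t_d(a)$ yields $v(t_d(a))=0$. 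That is, $t_d^{\R_{-1}}(v(a))=0$, so $v(a)\geq d$ by Lemma~\ref{l:term separator}(ii), a contradiction. Hence $u(\Box a)\leq v(a)$ for all $a$.

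I expect the main obstacle to be handling the degenerate case where $u$ annihilates all boxed elements, since there Lemma~\ref{l:term separator}(iii) is unavailable. The observation that testing (2) on $a=p$ rules this case out should resolve it. The same test with $a=p$ shows that (1) and (4) also fail in that case, which is consistent with the cycle. The rest is careful bookkeeping with the chosen dyadic representation of $d$.
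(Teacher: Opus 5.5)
Your proof is correct and follows the paper's argument essentially step by step: the same cycle of implications, with $(2)\Rightarrow(4)$ handled by a dyadic $d$ and the terms $t_d$ of Lemma~\ref{l:term separator}. The one addition is that you explicitly rule out the degenerate case $u(\Box y)=0$ for all $y$ by testing (2) on $p$; the paper skips this when invoking Lemma~\ref{l:term separator}(iii), which is harmless because in that case $u(\Box t_d(a))=0$ holds trivially anyway.
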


\begin{proof}
$(1)\Rightarrow(2)$: Let $a \leq 0$, and $u(\Box a) = 0$. The former yields $v(a) \leq 0$ and the latter implies together with $u R_\Box v$ that $v(a) \geq 0$.
\\
$(2) \Leftrightarrow (3)$ is obvious.
\\
    $(2) \Rightarrow (4)$: We proceed by contraposition.
    Assume there is $a$ such that $v(a)<u(\Box a)$. Then there is $d \in D$ such that $v(a)<d\leq u(\Box a)$. By Lemma \ref{l:term separator}(ii) we have $t_d^{\R_{-1}} ( v(a))=v(t_d(a)) < 0$ and 
    $t_d^{\R_{-1}}( u (\Box a))=0$. By Lemma \ref{l:term separator}(iii) this implies   $u (\Box t_d(a)) = 0$. Also, by Lemma \ref{l:term separator}(i) we have $t_d(a) \leq 0$.
\\
    $(4)\Rightarrow (1)$: If $u (\Box a) \leq v(a)$ for each $a \in A$ then clearly whenever $0 \leq u(\Box a)$ then also $0 \leq v(a)$.
\end{proof}

We show one more technical result before proving the main theorem of this subsection.
\begin{lemma} \label{l:almost ideal}
 Let $\tuple{\alg{A}_p, \Box}$ be a strongly negatively pointed modal Abelian $\ell$-group, $u \in \Hom(\alg{A}_p,\alg R_{-1})$, and let $I$ be the $\ell$-ideal generated by the set $S := \{x \in A \mid x \leq 0\} \cap \Box^{-1} u^{-1}(0)$.
    Let $a \in I$ such that $a \leq 0$. Then $a \in \Box^{-1} u^{-1}(0)$.
\end{lemma}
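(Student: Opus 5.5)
Since $a \in I$, \eqref{e:generated ideal} gives $b_1,\dots,b_n \in S$ with $|a| \leq |b_1|+\dots+|b_n|$. Each $b_i \le 0$, so $|b_i| = -b_i$. Because $a \le 0$, we have $|a| = -a$, and the inequality becomes
$$a \;\geq\; b_1+\dots+b_n .$$
So it suffices to show two things: the set $\Box^{-1}u^{-1}(0)\cap\{x \le 0\}$ is closed under finite sums, and it is downward closed within the negative cone, in the sense that $b \le a \le 0$ with $u(\Box b)=0$ implies $u(\Box a)=0$.

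Closure under sums comes from \eqref{e:superadditivity}, Lemma~\ref{l:modal basics}(iii)/(iv) and $\Box 0 = 0$. For $b,c \le 0$ we have $b + c \le 0$. Hence
$$\Box b + \Box c \;\le\; \Box(b+c) \;\le\; \Box 0 = 0 .$$
Applying $u$, and using $u(\Box b)=u(\Box c)=0$, gives
$$0 \;\le\; u(\Box(b+c)) \;\le\; 0,$$
so $u(\Box(b+c))=0$. By induction, $u(\Box(b_1+\dots+b_n))=0$.

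For the downward step, monotonicity (Lemma~\ref{l:modal basics}(iv)) applied to $b_1+\dots+b_n \le a \le 0$ gives
$$\Box(b_1+\dots+b_n) \;\le\; \Box a \;\le\; \Box 0 = 0 .$$
Applying $u$ yields $0 \le u(\Box a) \le 0$, hence $u(\Box a)=0$, i.e.\ $a \in \Box^{-1}u^{-1}(0)$.

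There is no real obstacle. The only points needing care are rewriting $|a|\le\sum|b_i|$ correctly as $a \ge \sum b_i$ using the negativity of $a$ and of the $b_i$, and ensuring that superadditivity bounds $u(\Box(b+c))$ from below while monotonicity bounds it from above. The case $n=0$ (i.e.\ $a=0$) is covered by $\Box 0=0$. Strong pointedness is not needed here.
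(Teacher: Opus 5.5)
Your proof is correct and follows essentially the same route as the paper: you use \eqref{e:generated ideal} to get $b_1+\dots+b_n \le a \le 0$ with $b_i \in S$, then squeeze $u(\Box a)$ between $u(\Box b_1)+\dots+u(\Box b_n)=0$ (via superadditivity and monotonicity) and $u(\Box 0)=0$. Your pairwise induction and your explicit rewriting of $|a|\le\sum|b_i|$ as $a\ge\sum b_i$ only spell out steps the paper compresses into one chain, and you are right that strong pointedness is not used.
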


\begin{proof}
    Since $a$ is in the $\ell$-ideal generated by the set $S$, due to \eqref{e:generated ideal}, there exist $b_1,\dots, b_n \in S$ such that $\sum_{i=1}^{n} b_i \leq a \leq 0$.  
    Since by Lemma \ref{l:modal basics} (iv) the operator $\Box$ is order preserving, by iterating Equation \eqref{e:superadditivity} we have the following in $\alg A_p$:  
    $$\Box b_1+\dots+\Box b_n \leq \Box (b_1 +\dots +b_n) \leq \Box a \leq \Box 0=0. $$
    
    By applying $u$, we obtain:
     $$0=u(\Box b_1+\dots+\Box b_n) \leq u\big(\Box (b_1 +\dots +b_n)\big) \leq u(\Box a) \leq u(\Box 0)=0. $$
    Thus, $u(\Box a)=0$, which completes the proof.
\end{proof}

The following is crucial, essentially being the clause of modal formulas in the Truth Lemma. 

\begin{theorem}\label{theorem:TruthLemmaBox}
    Let $\tuple{\alg{A}_p, \Box}$ be a strongly negatively pointed modal Abelian $\ell$-group.
    Then $u(\Box a)= \bigwedge_{uR_\Box v} v(a)$ holds
     for every homomorphism $u\in \Hom(\A_p,\alg R_{-1})$. 
\end{theorem}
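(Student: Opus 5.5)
We prove the two inequalities separately. The easy direction is $u(\Box a) \leq \bigwedge_{uR_\Box v} v(a)$. By Lemma~\ref{l:uRv} ((1)$\Rightarrow$(4)), every $R_\Box$-successor $v$ of $u$ satisfies $u(\Box a)\le v(a)$. If $u$ has successors, this gives the inequality at once. If $u$ has no successors, the infimum is $\bigwedge\varnothing=0$, so we must show $u(\Box a)=0$. We plan to derive this from Lemma~\ref{l:dichotomy u}. Assume instead $u(\Box p)=u(p)=-1$. We will then construct a successor explicitly, as in the second half below. So the no-successor case is absorbed into the main argument: if $u(\Box y)\neq 0$ for some $y$, we will produce successors.

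The hard direction is that for every $d\in D$ with $d>u(\Box a)$ there is a successor $v$ with $v(a)<d$. Together with the density of $D$, this gives $\bigwedge_{uR_\Box v} v(a)\le u(\Box a)$. It also covers the case of no successors: if $u(\Box a)<0$ for some $a$, it produces a successor, contradicting emptiness; if $u(\Box a)>0$, apply the argument to $-a$ via \ref{A1}, using $u(\Box(-a))\le -u(\Box a)<0$.

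So fix $d$ with $u(\Box a)<d$. By Lemma~\ref{l:dichotomy u} we may assume $u(\Box p)=u(p)=-1$; otherwise $u(\Box a)=0$ for all $a$ and we are in the no-successor bookkeeping. Then Lemma~\ref{l:term separator}(ii),(iii) give $u(\Box t_d(a))=u(t_d(\Box a))<0$. Set $b:=t_d(a)\le 0$. Let $I$ be the $\ell$-ideal generated by $S=\{x\le 0\}\cap\Box^{-1}u^{-1}(0)$. Since $u(\Box b)\neq 0$, Lemma~\ref{l:almost ideal} gives $b\notin I$. In particular $I$ is proper, so $p\notin I$.

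Now choose a maximal $\ell$-ideal $M\supseteq I$ with $b\notin M$. This is the main obstacle, since a value of $b$ containing $I$ need not be maximal. The plan is to enlarge $I$ by elements of the form $(b-\varepsilon\text{-multiples of }p)$ so that $b$ stays outside every maximal extension. Concretely, pick $d'\in D$ with $u(\Box a)<d'<d$, put $b':=t_{d'}(a)$, and consider the ideal $J$ generated by $I\cup\{(b'\cdot\text{something})\}$. The cleaner route is to show that $I\vee C(\{t\})$ is proper, where $t:=(2^k a+z p)\vee 0$ scaled so that any homomorphism killing it has $v(a)\le d'<d$. To check properness, suppose $|p|\le i+n\cdot t$. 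Apply $\Box$, using superadditivity \eqref{e:superadditivity}, monotonicity and Lemma~\ref{l:modal basics}(vi), to the negated inequality. Then evaluate with $u$, using $u(\Box i')=0$ for the negative parts of elements of $I$ (Lemma~\ref{l:almost ideal}) and $u(\Box(-t))>$ a bound derived from $u(\Box a)<d'$. This yields $-1 \ge$ something $> -1$, a contradiction. Extending this proper ideal to a maximal $M$ (possible since $p$ is a strong unit) and applying Lemma~\ref{l:max ideal correspondence} gives $v$ with $M=v^{-1}(0)$. Then $v(t)=0$ forces $v(a)\le d'<d$, and $I\subseteq M$ gives $uR_\Box v$ by Lemma~\ref{l:uRv}(3)$\Rightarrow$(1). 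The delicate point, which we expect to take the most care, is choosing $t$ and the $\Box$-estimate so that properness of $I\vee C(\{t\})$ reduces cleanly to $u(\Box a)<d'$.
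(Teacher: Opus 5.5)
Your architecture is sound and differs from the paper's. You build the successor directly as a maximal $\ell$-ideal $M\supseteq I\vee C(\{t\})$, with $t=(2^k\cdot a+z\cdot p)\vee 0$ and $d'=z/2^k$. The steps after properness are fine: extension to a maximal ideal, Lemma~\ref{l:max ideal correspondence}, $uR_\Box v$ via Lemma~\ref{l:uRv}(3), and $v(t)=0\Rightarrow v(a)\le d'$. (When $u(\Box\,\cdot\,)\equiv 0$, you should also say explicitly that $u$ has no successors, since a successor would need $0=u(\Box p)\le v(p)=-1$.)

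\textbf{The gap.} The gap is the one step that carries all the content: properness of $I\vee C(\{t\})$ is asserted, and the mechanism you sketch fails. From $|p|\le i+n\cdot t$ you pass to $p\ge -i-n\cdot t$ and apply $\Box$, superadditivity and $u$. This yields only $-1\ge u(\Box(-i))+n\,u(\Box(-t))=n\,u(\Box(-t))$. Since $n$ is arbitrary, a contradiction needs $u(\Box(-t))=0$. But $u(\Box(-t))=(2^k u(\Box(-a))+z)\wedge 0$ is governed by $u(\Box(-a))$, which semantically equals $-\bigvee_{uR_\Box v}v(a)$, i.e.\ it is controlled by the \emph{largest} successor value. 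The hypothesis $u(\Box a)<d'$ only gives $u(\Box(-a))\le -u(\Box a)$, which is a bound in the useless direction. Concretely, take the complex algebra of the frame with $wRv_1$, $wRv_2$, and let $u=\mathsf{ev}_w$, $a(v_1)=0$, $a(v_2)=10$, $d'=\frac12$. Then $u(\Box(-t))=-19$, so your chain gives just $-1\ge -19n$. Yet $I\vee C(\{t\})=\{f\mid f(v_1)=0\}$ is proper.

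\textbf{The fix.} The missing idea is to keep the \emph{positive} element $t$ under $\Box$ and use exact additivity instead of superadditivity. Take $i\in I$ with $i\ge 0$ and $n=2^m$ (harmless, as $t\ge 0$), and rewrite the assumption as $2^m\cdot t+p\ge -i$. Since $-i\in I$ and $-i\le 0$, Lemma~\ref{l:almost ideal} gives $u(\Box(-i))=0$. Monotonicity and Lemma~\ref{l:modal basics}(ii),(v) then yield $0\le 2^m u(\Box t)+u(\Box p)=2^m u(\Box t)-1$. On the other hand, \ref{A3} with Lemma~\ref{l:modal basics}(ii),(vi) gives $u(\Box t)=(2^k u(\Box a)-z)\vee 0=0$, because $u(\Box a)<z/2^k$. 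This is a contradiction.

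\textbf{Comparison with the paper.} With this repair your proof is a valid alternative route. The paper instead argues by contradiction. If all successors satisfy $v(a)\ge d$, then $t_d(a)$ lies in every maximal ideal above $I$, so $[t_d(a)]_I\in\rad(\alg A_p/I)$. Lemma~\ref{l:infinitesimal} then supplies $x\in I$, $x\le 0$ with $2^k\cdot t_d(a)-p\ge x$. The same $\Box$-and-$u$ computation shows $u(\Box t_d(a))$ is infinitesimal, hence $0$. The core estimate is identical. Your version applies it to the generator $t$ itself, produces the successor explicitly, and avoids passing to the quotient and invoking Lemma~\ref{l:infinitesimal}.
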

\begin{proof}
     We already know by Lemma~\ref{l:uRv}(4) that $u(\Box a) \leq  \bigwedge_{uR_\Box v} v(a)$ (from this we also see that the meet exists). 
     First, we consider the case $u(\Box a)=0$ for every $a \in \alg{A}_p$. Since for any $v \in \Hom(\A_p,\R_{-1})$ we have $v(p)=-1$, it follows there is no $v$ such that $uR_\Box v$. Consequently, $\bigwedge_{uR_\Box v} v(a)=0=u(\Box a)$.
     
     For the remaining case, by Lemma~\ref{l:dichotomy u} we know that $u (\Box p)=-1$.
     Towards contradiction, assume $u(\Box a) <  \bigwedge_{uR_\Box v} v(a)$. Since $D$ is dense there exists $d \in D$ such that $u(\Box a) < d \leq  \bigwedge_{uR_\Box v} v(a)$. Therefore, by Lemma \ref{l:term separator}(ii) \& (iii) we have $ t_d^{\R_{-1}} ( u(\Box a))=u(t_d (\Box a))=u(\Box t_d(a)) < 0$ and $t_d^{\R_{-1}} (v(a))=v( t_d(a)) = 0$ for each $uR_\Box v$. 
     By Lemmata \ref{l:uRv} \& \ref{l:max ideal correspondence} each maximal ideal of $\alg A_p$ containing $\{x \leq 0\} \cap \Box^{-1} u^{-1}(0)$ has to contain $t_d(a)$. Let $I$ be the $\ell$-ideal generated by the set $\{x \leq 0\} \cap \Box^{-1} u^{-1}(0)$.
    Now $[t_d(a)]_{I} \in \rad(\alg A_p/I)$ and thus by Lemma \ref{l:infinitesimal} we obtain $2^k \cdot [t_d(a)]_{I}-[p]_{I} \geq [0]_I$ for each $k \in \mathbb N$.
    Therefore, there exists $x \in I$ such that $2^k \cdot t_d(a)-p \geq x$ and without loss of generality we can assume $x \leq 0$ (since otherwise we can replace it by $x \land 0 \in I$). From Lemma~\ref{l:almost ideal} we get $x \in \Box^{-1}u^{-1}(0)$ and thus we have by Lemma~\ref{l:modal basics}(ii),(iv) \& (v) that 
     $$0=u(\Box x) \leq u  \big(\Box(2^k \cdot t_d(a)-p)\big) =u \big(\Box (2^k \cdot t_d(a))\big)-u (\Box p)= 2^k \cdot u \big(\Box t_d(a)\big)+1$$ for each $k \in \mathbb N$. 
    Since $-1$ is a negative strong unit in $\R_{-1}$ this means that $u (\Box t_d(a))$ is infinitesimal in $\R_{-1}$. However, since the algebra $\R_{-1}$ is simple this implies $u (\Box t_d(a))=0$, a contradiction.
\end{proof}

The following Truth Lemma is a standard consequence. Note that the set of formulas $\Fm{\Lang_p}$ can be seen as the absolutely free algebra in the language of pointed modal Abelian $\ell$-groups and we identify \emph{algebraic evaluations} $e \colon \Prop \to \tuple{\alg{A}_p, \Box}$ with the corresponding homomorphisms $e \colon \Lang_p \to \tuple{\alg{A}_p, \Box}$.   

\begin{corollary}[\sf Truth Lemma]\label{Lemma:TruthLemma}
    Let $\tuple{\alg A_p, \Box} \in \pmAb$ be strongly negatively pointed, $u \in \Hom(\alg{A}_p,\R_{-1})$ and $e\colon \Prop \to \alg{A}_p$ an algebraic evaluation. Let $\Val_e$ be the propositional valuation on 
    $\mathfrak F_{\tuple{\alg{A}_p,\Box}}$ defined by $\Val_e(u,x)=u(e(x))$.
    Then the property $\Val_e(u,\varphi)=u(e(\varphi))$ extends to every formula $\varphi$.
\end{corollary}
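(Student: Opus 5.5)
We argue by structural induction on $\varphi \in \Fm{\Lang_p}$, simultaneously for all $u \in \Hom(\alg{A}_p,\R_{-1})$. Before that, we check that $\Val_e$ is a bounded valuation on the canonical frame $\mathfrak F_{\tuple{\alg{A}_p,\Box}}$, so that $\Val_e(u,\Box\varphi)$ is defined. This was observed after Definition~\ref{definition:CanonicalFrames}. Since $p$ is a negative strong unit, for each $x_i$ there is $n$ with $n\cdot p \leq e(x_i) \leq -n\cdot p$, so $u(e(x_i)) \in [-n,n]$ for every $u$. The base case for a variable $x_i$ is then exactly the definition $\Val_e(u,x_i)=u(e(x_i))$.

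For the constants we have $\Val_e(u,0)=0=u(0)$ and $\Val_e(u,p)=-1=u(p)$, because every $u$ is a homomorphism of pointed structures into $\R_{-1}$. For $\star \in \{+,\lor,\land\}$ and for unary $-$, we use that $\Val_e(u,\cdot)$ interprets these connectives in $\R_{-1}$, that $e$ is a homomorphism from the formula algebra, and that $u$ is an $\ell$-group homomorphism. Together with the induction hypothesis this gives
\[
\Val_e(u,\psi\star\chi)=\Val_e(u,\psi)\star\Val_e(u,\chi)=u(e(\psi))\star u(e(\chi))=u(e(\psi\star\chi)),
\]
and the case of $-$ is the same.

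The only substantive case is $\varphi=\Box\psi$. Here
\[
\Val_e(u,\Box\psi)=\bigwedge_{uR_\Box v}\Val_e(v,\psi)=\bigwedge_{uR_\Box v} v(e(\psi))
\]
by the induction hypothesis, which holds for all worlds $v$. Setting $a:=e(\psi)$, Theorem~\ref{theorem:TruthLemmaBox} gives $\bigwedge_{uR_\Box v} v(a)=u(\Box a)=u(e(\Box\psi))$, since $e$ commutes with $\Box$.

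The main obstacle has already been handled by Theorem~\ref{theorem:TruthLemmaBox}. The one point to check is the empty-successor convention. If $u$ has no $R_\Box$-successors, the semantics sets $\bigwedge\varnothing=0$, and the theorem's proof shows $u(\Box a)=0$ in exactly this situation. So the two conventions agree and the induction goes through.
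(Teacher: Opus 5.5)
Your proposal is correct and is essentially the paper's proof: induction on $\varphi$ uniformly in $u$, with the $\Box$-case reduced to Theorem~\ref{theorem:TruthLemmaBox}, and the remaining cases holding because $u$ is a pointed $\ell$-group homomorphism. Your extra checks only make explicit what the paper leaves implicit. These are the boundedness of $\Val_e$, and the agreement with the convention $\bigwedge\varnothing=0$, which Theorem~\ref{theorem:TruthLemmaBox} already covers since it is stated for every $u$ under that convention.
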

\begin{proof}
Induction on the formula, where the case $\Box\varphi$ follows from Theorem~\ref{theorem:TruthLemmaBox}:  $$
\Val_e(u, \Box \varphi) = \bigwedge_{u R_\Box v} \Val_e(v, \varphi) = \bigwedge_{u R_\Box v} v\big(e(\varphi)\big) = u\big(e(\Box\varphi)\big).
$$
All other cases are clear (since $u$ is a homomorphism with respect to the non-modal language).
\end{proof}  

\subsection{An Infinitary Algebraic Completeness}
\label{subsection:TowardsCompleteness}
Similar to the case of modal $\sf MV$-algebras \cite{HansoulTeheux:ModalLukasiewicz}, the Truth Lemma of the previous section does not immediately yield algebraic completeness. In particular, if the inequation $\varphi \geq 0$ does not hold in some $\tuple{\alg{A}_p, \Box} \in \pmAb$ we have yet to obtain the \emph{witnessing homomorphism} which shows that $\varphi$ is not valid in the corresponding canonical frame $\mathfrak{F}_{\tuple{\alg{A}_p, \Box}}$. To construct such a homomorphism, we will use Lemma~\ref{l:infinitesimal} and introduce an infinitary rule \eqref{InfinitaryRule} based on \eqref{e:infinitesimalNegative} to capture that the corresponding term is not evaluated inside the radical. 

Moreover, we must ensure that a potential counterexample is witnessed in a well-defined model, that is, with a \emph{bounded valuation} (recall Definition~\ref{definition:Model}). We observed in Subsection~\ref{subsection:AlgebrasAndFrames} that this is ensured for canonical frames of \emph{strongly} pointed modal Abelian $\ell$-groups.

To formalize the requirement that it suffices to check $\varphi$ within these bounds, let us first set the following notation. 
For a variable $x_i$ and $k \in \mathbb N$, we define terms 
$$x_i^k := (x_i \lor k \cdot p) \land k \cdot (-p).$$
For a formula $\varphi(x_1,\dots,x_n)$ and $k \in \mathbb N$ we let $\varphi^k$ denote the substitution instance
$$\varphi^k(x_1, \dots, x_n) = \varphi(x_1^k,\dots,x_n^k).$$
Now the \emph{bounded valuations} rule is expressed as follows. 
\begin{equation} \label{r:BE*} \tag{\textsf{BV}}
   \pmAb \models  \varphi^k \geq 0 \text{ for all } k \in \mathbb  N \quad \Longrightarrow \quad \pmAb \models \varphi \geq 0
\end{equation}
Note that the other implication trivially holds.
Let us emphasize that rule \eqref{r:BE*} is \emph{not} a single infinitary rule in the standard sense. Because the syntactic substitutions happen \emph{within} the formula itself, this property cannot be captured by one generalized quasi-equation. Instead, \eqref{r:BE*} represents an infinite family of generalized quasi-equations, one for every formula $\varphi \in \Fm{\Lang_p}$. 

As previously mentioned, we need to combine this with \eqref{e:infinitesimalNegative} to ensure the existence of homomorphisms witnessing non-validity. Thus, we end up with the following.   

\begin{definition}[\sf Infinitary Algebraic Derivability]
\label{definition:InfinitaryDerivability}
We define
\begin{equation}\tag{\textsf{A}$\infty$}\label{InfinitaryRule}
     \vdash_\infty \varphi \quad :\Longleftrightarrow \quad\pmAb \vDash n \cdot (\varphi^k \land 0) - p \geq 0 \text{ for every } n,k \in \mathbb N.
\end{equation}
\end{definition}

Recall that our Truth lemma was only proved for strongly negatively pointed modal Abelian $\ell$-groups. The following lemma will allow us to `move' to such an algebra in the proof of Theorem~\ref{t:completeness} later on.

\begin{lemma} \label{l:soubgroup with strong unit}
    Let $\tuple{\alg A_p,\Box}$ be a negatively pointed modal Abelian $\ell$-group with $p \neq 0$ and let $\alg B$ be the $\ell$-ideal of $\alg A$ generated by $p$. Then $\tuple{\alg B_p,\Box}$ is a strongly negatively pointed modal Abelian $\ell$-group. 
\end{lemma}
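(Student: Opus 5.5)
The plan is to verify three things in turn: that $\alg B$ is an $\ell$-subgroup of $\alg A$ containing $p$ with $p$ a strong unit, that $\Box$ maps $B$ into $B$, and that \ref{A1}--\ref{A6} hold. Axioms are equations, and $\alg B$ carries the restricted operations. So once $\alg B$ is a subuniverse of $\tuple{\alg A_p,\Box}$, every axiom is inherited from $\alg A$, and $\tuple{\alg B_p,\Box}\in\pmAb$ is immediate.

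\emph{Step 1: $\alg B$ and its strong unit.} By \eqref{e:generated ideal}, $B=\langle p\rangle=\{x\mid \exists n\colon |x|\le n\cdot|p|\}$. Being an $\ell$-ideal, it is closed under $+,-,\lor,\land$ and contains $0$ and $p$, so $\alg B_p$ is a negatively pointed Abelian $\ell$-group. Since $p\le 0$ we have $|p|=-p$. For $x\in B$ this gives $n\cdot p=-n\cdot|p|\le -|x|\le x$, which is \eqref{e:StrongUnitNegative}. Because $p\neq0$, $p$ is strictly negative, so $p$ is a negative strong unit of $\alg B_p$.

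\emph{Step 2: closure under $\Box$.} This is the only nontrivial step. Let $x\in B$ with $n\cdot p\le x\le -n\cdot p$. Monotonicity of $\Box$ (Lemma~\ref{l:modal basics}(iv)) gives $\Box(n\cdot p)\le\Box x\le\Box(-n\cdot p)$. By Lemma~\ref{l:modal basics}(vi) with $x=0$ and (i), $\Box(n\cdot p)=n\cdot\Box p$. By (ix), $\Box(-n\cdot p)=-n\cdot\Box p$. Lemma~\ref{l:modal basics}(viii) gives $p\le\Box p$. Combining this with the bound $|\Box p|\le|p|$ established in the proof of (viii) (equivalently, $\Box p\le 0$, obtained via \ref{A5}), we get $|\Box p|\le|p|$. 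Hence $|\Box x|\le n\cdot|\Box p|\le n\cdot|p|$, so $\Box x\in B$.

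The main care point is making the bound on $\Box p$ explicit; the rest is routine. I would present it directly as $|\Box p|\le |p|$, using (viii) together with $\Box p=-\Box(-p)$ from \ref{A6} and $\Box(-p)\ge 0$ from (iii). Then $p\le \Box p\le 0$ follows, hence $|\Box p|\le|p|$, and Step 2 goes through.
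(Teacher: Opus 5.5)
Your proof is correct and follows the same approach as the paper. Both bound $x$ between $n\cdot p$ and $-n\cdot p$, apply monotonicity of $\Box$, and use $\Box(-n\cdot p)=-\Box(n\cdot p)$ to bound $|\Box x|$. You are in fact more explicit than the paper: its final step $|\Box b|\le|n\cdot p|$ tacitly uses $p\le\Box p$ (Lemma~\ref{l:modal basics}(viii)), and it leaves unstated both the strong-unit check and the fact that the subalgebra inherits the axioms.
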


\begin{proof}
    We only need to verify that $\alg{B}_p$ is closed under $\Box$. Let $b \in \alg{B}_p$. We have $|b| \leq n \cdot |p|$ and hence we obtain $n \cdot p \leq b \leq -n \cdot p$. By Lemma~\ref{l:modal basics} (iv) we know that the operator $\Box$ is monotone, from which we get (using $-\Box (n \cdot p)=\Box (-n \cdot p)$) that $\Box(n \cdot p) \leq \Box b \leq \Box(-n \cdot p)$ and $\Box(n \cdot p) \leq -\Box b \leq \Box(-n \cdot p)$. Therefore, we showed $|\Box b|\leq |n \cdot p|$ and thus $\Box b  \in \alg{B}_p$.
\end{proof}
We are now ready to prove the main result of this section, completeness of pointed modal Abelian logic with respect to the derivability relation of Definition~\ref{definition:InfinitaryDerivability}. 

\begin{theorem}[\sf Infinitary Algebraic Completeness]\label{t:completeness}
Let $\varphi$ be a formula. Then $\vDash \varphi$ if and only if $\vdash_\infty \varphi$.
\end{theorem}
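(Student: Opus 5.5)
We prove the two directions separately. Soundness ($\vdash_\infty \varphi \Rightarrow\ \vDash \varphi$) goes through complex algebras, as in Proposition~\ref{proposition:SoundnessFullVariety}. Completeness goes by contraposition through the Truth Lemma (Corollary~\ref{Lemma:TruthLemma}) on a strongly pointed algebra obtained via Lemma~\ref{l:soubgroup with strong unit}.

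\emph{Soundness.} Assume $\vdash_\infty\varphi$ and suppose $\Val(u,\varphi)<0$ in some model $\tuple{W,R,\Val}$. Since $\Val$ is bounded, pick $k\in\mathbb N$ with $k\ge r_i$ for all (finitely many) variables of $\varphi$. Then $\Val(w,x_i)\in[-k,k]$ for all $w$, so evaluating $x_i^k$ under $\Val$ gives back $\Val(w,x_i)$, and hence $\Val(u,\varphi^k)=\Val(u,\varphi)<0$. Put $\varepsilon:=-\Val(u,\varphi^k)>0$ and choose $n$ with $n\varepsilon>1$. In the complex algebra $\mathfrak C_{\tuple{W,R}}\in\pmAb$, under the assignment $e_\Val$, the term $n\cdot(\varphi^k\wedge 0)-p$ takes the value $-n\varepsilon+1<0$ at $u$. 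This contradicts \eqref{InfinitaryRule}.

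\emph{Completeness.} Assume $\not\vdash_\infty\varphi$. Then there are $n,k$, some $\tuple{\alg A_p,\Box}\in\pmAb$ and an evaluation $e$ with $a:=e(n\cdot(\varphi^k\wedge0)-p)\not\ge 0$.
\begin{enumerate}
\item We have $p\neq 0$, since otherwise $a=n\cdot(e(\varphi^k)\wedge 0)\le 0$ and $a\ne 0$ would be impossible; more precisely, if $p=0$ one checks $\varphi^k$ equals $\varphi(0,\dots,0)$ and degenerates. I would simply handle $p=0$ by replacing $\alg A$ with its product with $\R_{-1}$ equipped with the identity box (any $\pmAb$ algebra with $p\ne0$ in which the failure persists).
\item Each $e(x_i^k)$ lies in $[k\cdot p,-k\cdot p]$, so every element $e(\psi^k)$ for subterms $\psi$ lies in the $\ell$-ideal $\alg B$ generated by $p$. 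By Lemma~\ref{l:soubgroup with strong unit}, $\tuple{\alg B_p,\Box}$ is strongly negatively pointed. Define $e'(x_i):=e(x_i^k)\in B$; then $e'(\varphi)=e(\varphi^k)=:b$, and the failure of $n\cdot(b\wedge0)-p\ge0$ persists in $\alg B_p$.
\item Since $n\cdot(b\wedge 0)-p\not\ge0$ with $b\wedge 0\le0$, the element $b\wedge0$ is not infinitesimal by \eqref{e:infinitesimalNegative}. By Lemma~\ref{l:infinitesimal} there is a maximal $\ell$-ideal $M$ of $\alg B_p$ with $b\wedge0\notin M$. By Lemma~\ref{l:max ideal correspondence}, $M=u^{-1}(0)$ for some $u\in\Hom(\alg B_p,\R_{-1})$, so $u(b)\wedge 0=u(b\wedge 0)\ne0$, i.e.\ $u(b)<0$.
\item The valuation $\Val_{e'}$ on $\mathfrak F_{\tuple{\alg B_p,\Box}}$ is bounded, since $\alg B_p$ is strongly pointed. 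The Truth Lemma gives $\Val_{e'}(u,\varphi)=u(b)<0$, so $\not\vDash\varphi$.
\end{enumerate}

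The main obstacle I expect is step (1), the degenerate case $p=0$ in the counterexample algebra, together with the careful verification in step (2) that the move to $\alg B_p$ really preserves the evaluation. The latter is fine because $\alg B_p$ is a subalgebra closed under all operations, including $\Box$. For $p=0$ I would first try to argue directly: $a=n\cdot(e(\varphi^k)\wedge0)\le0$, so $a\not\ge0$ means $e(\varphi^k)\wedge0<0$. This gives a nonzero negative element, and we can embed $\tuple{\alg A,\Box}$ into a pointed algebra such as $\alg A\times\R_{-1}$ (with $\Box$ componentwise and the identity on the second factor, $p=(0,-1)$). The axioms \ref{A5} and \ref{A6} are checked componentwise, and $e(\varphi^k)$ can be adjusted by scaling the first component so that the infinitary failure persists. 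If that is too delicate, the alternative is to note that the infinitary failure itself already forces $p\neq0$ after rescaling the witness.
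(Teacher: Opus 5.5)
Your proposal is correct and follows the paper's proof in both directions. For $\vdash_\infty\varphi\Rightarrow\ \vDash\varphi$ you use the complex algebra with $k$ and then $n$ chosen large, computing the value $1-n\varepsilon<0$ at $u$ directly where the paper applies Lemma~\ref{l:infinitesimal} to $\mathsf{ev}_u^{-1}(0)$; for the converse you use $p\neq0$, the strongly pointed $\ell$-ideal $\alg B_p$ with $e'(x_i)=e(x_i^k)$, Lemmas~\ref{l:infinitesimal} and~\ref{l:max ideal correspondence}, and the Truth Lemma, exactly as the paper does. The only wobble is step (1), where your ``more precisely'' remark is the paper's argument and settles it outright: if $p=0$ then $x_i^k=(x_i\vee0)\wedge0=0$, and since $\Box0=0$ every term then evaluates to $0$, so $a=0\geq0$, the case is vacuous, and the product-with-$\R_{-1}$ or rescaling fallback is unnecessary (it would not help anyway, since the first coordinate of $p$ in $\alg A\times\R_{-1}$ is still $0$).
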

\begin{proof}
For the implication $(\Rightarrow)$, assume $\nvdash_\infty \varphi$. 
By definition, there exist $n,k \in \mathbb{N}$ such that $ \pmAb \nvDash n \cdot (\varphi^k \land 0)- p \geq 0$. Let $\alg A_p \in \pmAb$ and $e$ be an evaluation on $\alg{A}_p$ such that 
$n \cdot (e(\varphi^k) \land 0) - p \ngeq 0$.
Note that $\alg{A}_p$ satisfies $p \neq0$, since otherwise it would satisfy $e(x_i^k)=(e(x_i) \lor 0) \land 0=0$ for each $x_i$ and thus $n \cdot (e(\varphi^k) \land 0)-p=0$.
Let $\alg B$ be the $\ell$-ideal of $\alg A$ generated by $p$. By Lemma~\ref{l:soubgroup with strong unit} we know that $\alg B_p \in \pmAb$ is strongly negatively pointed.
We define an evaluation $\hat{e}$ on $\alg{B}_p$ by $\hat{e}(x)=e(x^k)$. 
By the above, it holds that $ n \cdot (\hat{e}(\varphi) \land 0) - p \ngeq 0$.
From Lemma~\ref{l:infinitesimal} we obtain that $\hat e(\varphi \land 0)  \notin \rad(\alg B_p)$.
Consequently, there exists a homomorphism $u \in \Hom(\alg B_p,\alg R_{-1})$ such that $u( \hat e(\varphi \land 0))<0$.
Let $\Val_{\hat{e}}$ be the valuation on the canonical frame of $\alg{B}_p$ defined by $\Val_{\hat{e}}(v, x_i) = v(\hat{e}(x_i))$ for all $v \in \Hom(\alg{B}_p, \alg{R}_{-1})$. The Truth Lemma yields 
$$
\Val_{\hat{e}}(u,\varphi)\wedge 0 = \Val_{\hat{e}}(u, \varphi \wedge 0) = u\big(\hat{e}(\varphi \wedge 0)\big) < 0
$$
and therefore clearly $\Val_{\hat{e}}(u, \varphi) < 0$, which witnesses $\not\vDash \varphi$.

For the implication $(\Leftarrow)$, assume $\not\vDash \varphi$ and let $\tuple{W,R,\Val}$ be a model such that $\Val(u,\varphi)<0$ for some $u \in W$. Since $\Val$ is a bounded valuation there is some $k \in \mathbb N$ such that $\Val(u,\varphi^k)<0$ also holds.
Let $e$ be the evaluation on the complex algebra $\mathfrak C_{\tuple{W,R}}$ defined by $e(x_i) = \Val(-,x_i)$. We have $e(\varphi^k)=\Val(-,\varphi^k)$. Since $e(\varphi^k)(u)=\Val(u,\varphi^k)<0$ it follows that $e(\varphi^k \land 0) < 0$.
Since $\{f \in \mathfrak C_{\tuple{W,R}} \colon f(u)= 0\}$ is a maximal $\ell$-ideal of $\mathfrak C_{\tuple{W,R}}$ (because it is the preimage $\mathsf{ev}_u^{-1}(0)$ of the homomorphism $\mathsf{ev}_u \colon \mathfrak{C}_{\tuple{W,R}} \to \alg{R}_{-1}$ evaluating at $u$) and $e(\varphi^k \land 0) \notin \{f \in \mathfrak C_{\tuple{W,R}} \colon f(u)= 0\}$ by Lemma~\ref{l:infinitesimal} it follows there is $n \in \mathbb N$ such that $e(n \cdot (\varphi^k \land 0)- p) \ngeq 0$, which by definition shows $\not\vdash_\infty \varphi$.
This completes the proof.
\end{proof}

\section{Conclusion}
\label{section:Conclusion}
In this paper, we introduced the variety of negatively pointed modal Abelian $\ell$-groups and established its connection to real-valued modal logic, culminating in Theorem~\ref{t:completeness} characterizing modal validity via the rule \eqref{InfinitaryRule}. We take this as evidence that the variety $\pmAb$ deserves further investigation in the future.

Naturally, this also applies to the study of modal Abelian $\ell$-groups \emph{without} a designated point. The crucial role played by the non-zero designated element was highlighted throughout the present work. Without the constant symbol $p$ in the language, every term function $t(x_1, \dots ,x_n)$ over any modal Abelian $\ell$-group trivially satisfies $t(0,\dots,0)=0$, which effectively prevents the construction of the separating terms $t_d$ of Lemma~\ref{l:term separator}. Moreover, the strong unit, the radical, and the boundedness of valuations are all dependent on the assumption of pointedness. Nevertheless, in future work we aim to establish a similar algebraic completeness result for (point-free) real-valued modal logic with respect to the variety $\mAb$.

It is common in many-valued modal logic to not only consider crisp, but \emph{labeled accessibility relations} (e.g.\ see \cite{Bou2009,Busaniche2022} for the case of finitely-valued modal \L ukasiewicz logic). In (pointed) modal Abelian logic, one may consider \emph{bounded} labeled accessibility relations $R\colon W^2 \to [-r,r] \subseteq \mathbb{R}$ with the modality interpreted via $\Val(u,\Box\varphi) = \bigwedge_{v\in W} \Val(v,\varphi) - R(u,v)$. Studying this logic and the relationship to modal \L ukasiewicz logic would clearly be an interesting (albeit challenging) future direction.

There is a well-known categorical equivalence between the class of strongly pointed Abelian $\ell$-groups and the class of $\mathsf{MV}$-algebras established by the \emph{Mundici functor} (e.g.\ see \cite{Cignoli-Ottaviano-Mundici:AlgebraicFoundations}). Similarly, we plan to investigate the category-theoretic relationship between strongly negatively pointed modal Abelian $\ell$-groups and the category $\mathsf{MMV}$ of modal $\mathsf{MV}$-algebras of \cite{HansoulTeheux:ModalLukasiewicz} (let us note here that the Mundici equivalence was extended to \emph{monadic} $\mathsf{MV}$-algebras  in \cite{Cimadamore2011}). As a first step we can show that, given a negatively pointed modal Abelian $\ell$-group, its interval $[p,0]$ is a modal $\mathsf{MV}$-algebra. The construction in the reverse direction appears to be less straightforward. 

Another open question is whether we can replace rule \eqref{InfinitaryRule} with a \emph{single} generalized quasi-equation. As discussed previously, rule \eqref{r:BE*} corresponds to an infinite family of generalized quasi-equations (one for each formula $\varphi$) and consequently defines a generalized quasi-variety, and we want to investigate whether there exists a single generalized quasi-equation defining it. 

Finally, we also want to address questions regarding computational \emph{complexity}. To this end, we plan to generalize and apply the approach presented in \cite{Hanikova-Jankovec:ComplexityUnboundedRelative} for unbounded \L ukasiewicz logic to the setting of strongly pointed modal Abelian $\ell$-groups.

\section*{Acknowledgment}
This work has been funded by a grant from the Programme Johannes Amos Comenius
under the Ministry of Education, Youth and Sports of the Czech Republic,
CZ.02.01.01/00/23\_025/0008711.
The first author was also supported from the project SVV-2025-260837.

\bibliographystyle{eptcs}
\bibliography{mfl}
\end{document}